\documentclass[journal,12pt,onecolumn,draftclsnofoot]{IEEEtran}
\usepackage{graphicx,amsmath,float,amsthm,hyperref,subcaption,amssymb,color,mathdots,yhmath,listings,cite}
\usepackage[utf8]{inputenc}
\usepackage[english]{babel}

\newtheorem{lemma}{Lemma}
\newtheorem{theorem}{Theorem}
\newtheorem{remark}{Remark}
\addto\captionsenglish{}
\title{Localization Efficiency in Massive MIMO Systems}  
\author{Masoud Arash,~\IEEEmembership{Student member, IEEE}, Hamed Mirghasemi,~\IEEEmembership{Member, IEEE}, Ivan Stupia,~\IEEEmembership{Member, IEEE}, Luc Vandendorpe,~\IEEEmembership{Fellow, IEEE}
\thanks{The authors are with the Institute of Information and Communication Technologies, Electronics and Applied Mathematics (ICTEAM), Universit\'e catholique de Louvain, 1348 Louvain-la-Neuve, Belgium (E-mail: masoud.arash@uclouvain.be).
	
This work has been submitted to the IEEE for possible publication. Copyright may be transferred without notice, after which this version may no longer be accessible.}}
\begin{document}
	\maketitle
	\begin{abstract}
		In the next generation of wireless systems, Massive MIMO offers high angular resolution for localization. By virtue of large number of antennas, the Angle of Arrival (AoA) of User Terminals (UTs) can be estimated with high accuracy. According to Dense Multipath Component (DMC) channel model, local scatters around UTs can create different multipath signals for each antenna at the Base Station (BS). We obtain a deterministic form for the Cramer-Rao Lower Bound ($CRLB$) in a multi-user scenario when the contribution of the multipath signals is considered. We do this when the multipath signals are independent and identically distributed (i.i.d) with arbitrary distribution. Then, we redefine a localization efficiency function for a multi-user scenario and numerically optimize it with respect to (w.r.t) the number of antennas. We prove when only a subset of the available antennas is used, $CRLB$ can be minimized w.r.t which set of antennas is used. Then, an antenna selection strategy that minimizes $CRLB$ is proposed. As a benchmark, we apply the proposed antenna selection scheme to the MUltiple SIgnal Classification (MUSIC) algorithm and study its efficiency. Numerical results validate the accuracy of our analysis and show significant improvement in efficiency when the proposed antenna selection strategy is employed.
	\end{abstract}
	\begin{IEEEkeywords} Massive MIMO, CRLB, Angle of Arrival, Localization Efficiency, Antenna Selection.\end{IEEEkeywords}
	\section{Introduction}
	Massive MIMO systems are one of the prime candidates for the next generation of wireless systems \cite{larsson2014massive}. These systems employ a large number of antennas, which provides numerous opportunities for performance improvement of a wireless system, like increased capacity, spatial diversity, and lower latency \cite{larsson2014massive}. Interestingly, these systems offer high accuracy for different localization kinds, especially AoA and orientation of UTs \cite{li2019massive,guerra2015position}. In addition to these benefits, the use of massive antenna arrays would enable more efficient use of the time and frequency resources by enabling the simultaneous localization of more UTs.
	Various types of localization with different goals are introduced in the literature. Anchor Based schemes in which UTs' locations are estimated w.r.t an anchor are one of the most popular methods \cite{elsawy2017base}. In such approaches, methods like Received Signal Strength (RSS), Difference Time of Arrival, and AoA estimation are used to map UTs' locations. Performance of these methods is usually compared with the $CRLB$, which gives a lower bound on the estimation error for any unbiased estimator \cite{kay1993fundamentals}. 
	
	Several works studied $CRLB$ in Massive MIMO settings. For a planar antenna array, \cite{wang2012low} approximated it for a fading free channel. In \cite{shahmansoori2017position} authors derived $CRLB$ as a function of instantaneous parameters for AoA, angle of departure, delay, and orientation estimation of UTs for different scenarios when there is a dominant path either in Line of Sight (LoS) or Non-LoS. In \cite{fan2018angle} $CRLB$ for AoA and channel gain is obtained in a Massive MIMO system with the planar array for a single UT. Authors in \cite{abu2018error} approximated $CRLB$ for a single UT case of a planar array in mmwave case when multipath effects are considered. 
	
	All works in \cite{shahmansoori2017position}-\cite{abu2018error}, considered an identical channel coefficient from each UT to all the antennas at the BS. This means all antennas are fully correlated, and the antenna array has zero spatial diversity. The central hypothesis behind this common assumption is that all the UTs experience channels in which one or few dominant paths convey most of the received signal power to the BS. As a matter of fact, previous studies on localization in massive MIMO systems offer valuable insights into the information that can be extracted from the dominant components of the channel (if there is any), with the apparent consequence that if those components are shadowed, the $CRLB$ of systems may grow indefinitely. It is worth noting that this assumption might contradict the original idea of developing massive MIMO technologies as an efficient solution to provide seamless and reliable links between UTs and BS even in the absence of LoS or clear dominant paths. This consideration is corroborated by many studies of Massive MIMO systems, such as \cite{bjornson2015optimal}, assuming that UTs have independent channel coefficients for different antennas.     
	
	Besides, the same infrastructure might be used for both localization and data transmission. In this case, to avoid single spatial diversity, which results from fully correlated channels, antennas are placed in a way to gain independence. For instance, in mmwave scenarios, a separation of few centimeters can do this. Moreover, based on the DMC channel model \cite{astely1999effects,liu2012cost}, different channel coefficients for each antenna can stem from local scatters in the vicinity of the UTs (Fig.~\ref{sysMod}). In this model, each antenna receives a dominant path signal accompanied by multiple multipath signals. This implies that there is a clear discrepancy between studies for data transmission and localization in massive MIMO systems. The question that arises here is how $CRLB$ changes when different antennas have different channel coefficients? Or in other words, can we exploit the presence of multipath signals in massive antenna arrays to extract the AoA information and show the contribution of these multipath signals in the $CRLB$?
	
	In this perspective, \cite{camargo2018probability} tackled the problem of i.i.d channel coefficients for AoA estimation for the first time. However, due to mathematical complications of $CRLB$ analysis, \cite{camargo2018probability} only addresses the probability of AoA detection for a single UT. The first objective of this work is then to fill this gap by proposing a deterministic expression of the CRLB for multiple UTs under the hypothesis of i.i.d. channel coefficients between antennas. To obtain a deterministic $CRLB$, various ideas have been proposed to remove the effects of instantaneous nuisance parameters (e.g., fading channel coefficients) in $CRLB$. In \cite{miller1978modified}, Miller and Chang introduced a performance metric obtained by taking the expectation from the $CRLB$ w.r.t. the nuisance parameter, while in \cite{d1994modified} the authors defined a Modified $CRLB$ ($MCRLB$) by taking an expectation from Fisher Information Matrix (FIM). The downside of those proposals is that the proposed metrics depend on the particular channel probability distribution. This problem is worsened in multi-user (MU) MIMO systems. To the best of our knowledge, this is the first work proposing a deterministic and closed-form solution for the $CRLB$ in MU Massive MIMO systems when the contribution of multipath signals with arbitrary distribution is considered. 
	
	To achieve this, we take advantage of Random Matrix Theory (RMT) to prove that the $CRLB$ of a MU Massive MIMO system almost surely converges to a deterministic function of the channel variance for all possible distributions of the channel coefficients. We also show that $CRLB$ for AoA estimation always converges toward a finite value, meaning that by virtue of considered multipath signals, AoA information can always be extracted, even when the dominant path is in poor condition. This result is of particular importance to give a theoretical foundation to those techniques, such as those proposed in \cite{li2019massive,mahler2016tracking,Sibren2020NN,zhao2017tone,wielandt2017indoor}, aiming at exploiting multipath signals to extract or refine AoA estimation.
	
	Though Massive MIMO technology may guarantee seamless and reliable localization for multiple UTs with limited time and frequency resources, those benefits may be jeopardized by the increased energy consumption of those systems. The energy efficiency concern in Massive MIMO systems has drawn many research interests during last years \cite{bjornson2015optimal,andrews2014will}. In \cite{bjornson2015optimal}, authors discussed how a comprehensive model for such systems should consider the energy consumption of different parts, including hardware and signal processing units. Some of these parts that include computational and hardware energies scale with the number of antennas. Accordingly, several works have studied how performance criteria change by considering such a comprehensive model \cite{gao2015massive,arash2017employing}. Efficiency in localization is only studied in few works and mainly at the network level. In \cite{reich2008comparing} authors discussed the product of error and power consumption of a wireless sensor network as an efficiency parameter. \cite{lieckfeldt2009characterizing} used the inverse of this product to give a physical sense to this criterion in the same settings, obtaining $CRLB$ through simulations. Yet, the concept of efficiency in localization demands more attention as it can reflect essential trade-offs. 
	
	For this reason, in the second part of this work, we redefine a Localization Efficiency ($LE$) function so it can be used for extensive studies in MU scenarios. First, $LE$ is formulated with fundamental performance metrics, using obtained $CRLB$ for a typical system, number of UTs, and total energy consumption. Contrary to previous studies, we use a comprehensive energy consumption model. Interestingly, the study of $CRLB$ reveals that when a subset of available antennas is used, both the behavior and formulation of the $CRLB$ change depending on which set of antennas is utilized. Next, we study the antenna selection and find the set that minimizes $CRLB$ when only a subset of available antennas is used. Also, we show that the optimal number of antennas is changed for various antenna selection strategies. Finally, to analyze $LE$ and antenna selection in simpler system models, the MUSIC algorithm is studied. The contributions of this paper are summarized as follows:
	
	\begin{itemize}
		\item CRLB for AoA estimation of a MU Massive MIMO system is derived in a deterministic form under i.i.d channel model with unknown distribution, using RMT methods. 
		\item Efficiency function for localization is redefined as a function of system parameters for the evaluation of localization methods, and it is used to study the trade-off between performance and energy consumption.
		\item Antenna selection for localization is introduced, and a selection strategy that minimizes $CRLB$ is presented. $LE$ is reformulated in this case, and the optimal number of antennas is obtained for this selection method. 
		\item $LE$ of MUSIC algorithm based on its exact required computations is derived. Also, different antenna selection methods are studied for this algorithm. 
	\end{itemize}
	
	The remainder of this paper is organized as follows. In Section~\ref{SystemModel} we introduce our system model. $CRLB$ is calculated for different channel models of BS antennas in Section~\ref{crlb}. $LE$ is formulated in Section~\ref{LocalizationEfficiency}. All of these are then used to study the idea of antenna selection in Section~\ref{Antennaselection}. $LE$ of the MUSIC algorithm is dealt with in detail in Section~\ref{music}. In Section~\ref{Numerical}, numerical results are used to validate the theoretical analysis and make comparisons of $LE$ under various scenarios. Finally, the major conclusions are drawn in Section~\ref{Conclusion}.
	
	\emph{Notation}: Boldface lower case is used for vectors, $\boldsymbol{x}$, and upper case for matrices, $\boldsymbol{X}$. $\boldsymbol{X}^*$, $\boldsymbol{X}^T$, $\boldsymbol{X}^H$ and $\boldsymbol{X}_{k,k}$ denote conjugate, transpose, conjugate transpose and $(k,k)$th entry of $\boldsymbol{X}$, respectively. $\mathbb{E}\{.\}$ denotes expectation, $Card(.)$ is cardinality of a set, $j=\sqrt{-1}$, $\mid.\mid$ stands for absolute value of a given scalar variable, $tr$ is trace operator, $\odot$ is Hadamard product operator, $\xrightarrow{a.s.}$ means Almost Sure convergence and $diag(\boldsymbol{x})$ is a diagonal matrix whose diagonal entries are the elements of $\boldsymbol{x}$. Also, $\boldsymbol{I}_K$ is $K\times K$ identity matrix. When $\boldsymbol{y}=[y_1\hspace{2mm}y_2\hspace{2mm} \ldots\hspace{2mm} y_p]^T$ and $\boldsymbol{x}=[x_1\hspace{2mm}x_2\hspace{2mm} \ldots\hspace{2mm} x_q]^T$, we define 
	\begin{equation}
		(\frac{\partial \boldsymbol{y}}{\partial \boldsymbol{x}})_{p,q}=\frac{\partial \boldsymbol{y}_p}{\partial \boldsymbol{x}_q}. \label{JHH}
	\end{equation}
	\section{System Model} \label{SystemModel}
	We consider the uplink of a single-cell mmwave communication MU-Massive MIMO system with a BS at the center of the cell, equipped with $M$ antennas, equally separated by a distance $d$ (Fig.~\ref{sysMod}). There are $K$ single antenna UTs distributed all over the cell. In this system, BS estimates AoA of UTs and channel coefficients using the pilot signals transmitted by UTs with wavelength $\lambda$. The first antenna at the top of the antenna array is the reference point w.r.t. which the AoA is measured. In the MU scenario, UTs transmit their pilot signals at the same time and frequency.
	
	\begin{figure}[t]
		\centering
		\includegraphics{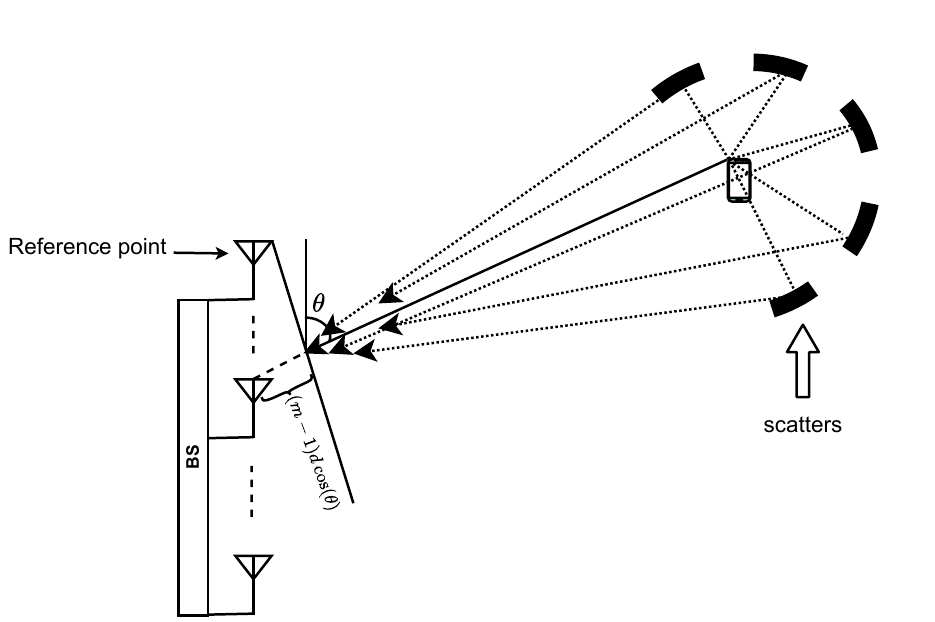}
		\caption{DMC channel model with multipath signals in the vicinity of the dominant path.}
		\label{sysMod}
	\end{figure}
	In this model, there are local scattering objects in the vicinity of the users as illustrated in Fig.\ref{sysMod}, resulting in the DMC channel model\cite{astely1999effects}. There is a dominant path that defines the AoA, showed with the solid line in Fig.~\ref{sysMod}, accompanied by many multipath signals that are originated by scatters at the vicinity of the user, depicted by dotted lines. Such a channel can be modeled as a random variable with a mean equal to the channel coefficient of the dominant path and a variance that accounts for the sum of all the multipath signals at each antenna.
	
	As the multipath signals are in the vicinity of the dominant path for each antenna, they will also travel the dashed line path, with a little difference depending on the effects of the scatters. So, the phase of the multipath signals at the $m$th antenna can be modeled as a random variable, with the mean equal to $2\pi\frac{(m-1)d\cos(\theta)}{\lambda}$ (that is the phase difference created due to the travel of signal in the dashed line). The total channel coefficient for the $m$th antenna from the $k$th UT can be written as
	\begin{align}
		g_{m,k}&=l(r_k)(h_{d_k}e^{-j\frac{2\pi r_k}{\lambda}}e^{-j\frac{2\pi}{\lambda}d(m-1)\cos(\theta_k)}+\check{h}_{r_{m,k}}e^{-j\alpha_{m,k}}e^{-j\frac{2\pi}{\lambda}d(m-1)\cos(\theta_k)})\nonumber \\
		&=l(r_k)e^{-j\frac{2\pi}{\lambda}d(m-1)\cos(\theta_k)}(\bar{h}_k+\hat{h}_{r_{m,k}})=l(r_k)e^{-j\frac{2\pi}{\lambda}d(m-1)\cos(\theta_k)}h_{m,k},
	\end{align}
	where $l(r_k)$ is the large scale fading of the $k$th UT, $h_{d_k}$ is the amplitude of the dominant path, $r_k$ is the length of the dominant path, $\check{h}_{r_{m,k}}$ and $\alpha_{m,k}$ are the amplitude and phase of aggregated multipath signals at the $m$th antenna. Also, $\bar{h}_k=h_{d_k}e^{-j\frac{2\pi r}{\lambda}}$ and $\hat{h}_{r_{m,k}}=\check{h}_{r_{m,k}}e^{-j\alpha_{m,k}}$. Finally, $h_{m,k}$ is a random variable with a mean equal to $\bar{h}_k$ and variance equal to $\sigma^2_h$ that accounts for the power of the multipath signals from the $k$th user at the $m$th antenna.
	Therefore, the received signal at the BS side can be written as
	\begin{equation}
		\boldsymbol{y}=\boldsymbol{G}\boldsymbol{s}+\boldsymbol{n}, \label{recived signal}
	\end{equation}
	in which $\boldsymbol{G}$ is $M\times K$ channel matrix between $M$ BS antennas and $K$ UTs, $\boldsymbol{s}\in\mathbb{C}^{K\times 1}$ is the vector of transmitted pilots and $\boldsymbol{n}\sim \mathcal{CN}(0,\sigma_n^2\boldsymbol{I}_M)$ is additive noise. Matrix $\boldsymbol{G}$ is composed as
	\begin{equation}
		\boldsymbol{G}=(\boldsymbol{A}_{Rx}\odot \boldsymbol{H})\boldsymbol{B}^{\frac{1}{2}}, \label{G composition}
	\end{equation}
	where
	\begin{equation}
		\boldsymbol{A}_{Rx}=[\boldsymbol{a}_{Rx}(\theta_1)\hspace{2mm} \boldsymbol{a}_{Rx}(\theta_2)\hspace{2mm} \ldots \hspace{2mm}\boldsymbol{a}_{Rx}(\theta_K)]
	\end{equation}
	contains $M\times 1$ steering vectors of BS antenna array response for $K$ UTs in which 
	\begin{equation}
		\boldsymbol{a}_{Rx}(\theta_k)=\frac{1}{\sqrt{M}}[1 \hspace{2mm}e^{-j\beta\cos(\theta_k)}\hspace{2mm} \ldots \hspace{2mm}e^{-j(M-1)\beta\cos(\theta_k)}]^T, \label{aTx}
	\end{equation}
	$\theta_k$ is $k$th UT's AoA for $k\in\{1,2,\ldots,K\}$ and $\beta=\frac{2\pi d}{\lambda}$. 
	$\boldsymbol{H}$ is an $M\times K$ matrix whose $(m,k)$th element, $h_{m,k}$, is fast fading coefficient between $k$th UT and $m$th BS antennas
	\begin{equation} 
		h_{m,k}=h^r_{m,k}+jh^i_{m,k}, \label{Hiid}
	\end{equation}
	and
	\begin{align} 
		&\mathbb{E}\{h^r_{m,k}\}=\mathcal{R}e\{h_{d_k}\}, \hspace{2mm} \mathbb{E}\{h^i_{m,k}\}=\mathcal{I}m\{h_{d_k}\},\nonumber\\
		&\mathbb{E}\{|h^r_{m,k}|^2\}-(\mathbb{E}\{h^r_{m,k}\})^2=\mathbb{E}\{|h^i_{m,k}|^2\}-(\mathbb{E}\{h^i_{m,k}\})^2=\sigma_h^2, \label{RIofH}
	\end{align}
	for $m\in\{1,2,\ldots,M\}$ and $k\in\{1,2,\ldots,K\}$. $\boldsymbol{B}$ is an $K\times K$ diagonal matrix whose $k$th diagonal element is $l(r_k)$. Also, for simplicity, we define received signal to noise ratio at the BS side as
	\begin{equation} 
		\rho_k\triangleq\frac{|s_k|^2l(r_k)}{\sigma_n^2}. \label{snr}
	\end{equation}
	\section{CRLB} \label{crlb}
	In this section, we derive a deterministic formula for the $CRLB$ of AoA estimation for MU Massive MIMO settings where multipath signals' contribution is considered. We assume that the channel coefficients are random but known and so not needed to be estimated. Vector of desired parameters will be
	\begin{equation}
		\boldsymbol{\eta}_\theta=[\theta_1 \hspace{2mm} \theta_2\hspace{2mm}\ldots \hspace{2mm}\theta_K]^T. \label{etatet}
	\end{equation}
	Defining $\hat{\boldsymbol{\eta}}_\theta$ as the unbiased estimation of $\boldsymbol{\eta}_\theta$, the Mean Square Error (MSE) of the estimator is lower bounded as \cite{kay1993fundamentals}
	\begin{equation}
		\mathbb{E}_{\boldsymbol{y}|\boldsymbol{\eta}_\theta}\{(\boldsymbol{\eta}_\theta-\boldsymbol{\hat{\eta}}_\theta)(\boldsymbol{\eta}_\theta-\boldsymbol{\hat{\eta}}_\theta)^T\}\ge \boldsymbol{CRLB}_{\theta}=\boldsymbol{J}^{-1}_{\theta,\theta}, \label{mse}
	\end{equation}
	where $\boldsymbol{J}$ is FIM and is defined as \cite{kay1993fundamentals}
	\begin{equation}
		\boldsymbol{J}_{\theta,\theta}=\mathbb{E}_{\boldsymbol{y}|\boldsymbol{\eta}_\theta}[-\frac{\partial^2 \ln f(\boldsymbol{y}|\boldsymbol{\eta}_\theta)}{\partial \boldsymbol{\eta}_\theta \partial \boldsymbol{\eta}_\theta^T}], \label{FIM def}
	\end{equation}
	where $f(\boldsymbol{y}|\boldsymbol{\eta})$ is the likelihood function of the received signal. From \cite{kay1993fundamentals} (page 525 and proof on page 563)
	\begin{equation}
		\boldsymbol{J}_{\theta,\theta}=\frac{2}{\sigma_n^2}\mathcal{R}e[(\frac{\partial \boldsymbol{w}}{\partial \boldsymbol{\eta}_\theta})^H\frac{\partial \boldsymbol{w}}{\partial \boldsymbol{\eta}_\theta}], \label{FIM1}
	\end{equation}
	where
	\begin{equation}
		\boldsymbol{w}\triangleq\boldsymbol{G}\boldsymbol{s}=\frac{1}{\sqrt{M}}
		\left[\begin{array}{c} 
			\sum_{i=1}^{K}h_{1,i}s_il(r_i) \vspace{2mm}\\
			\sum_{i=1}^{K}h_{2,i}s_il(r_i)e^{-j\beta\cos(\theta_i)}\\
			\vdots\\
			\sum_{i=1}^{K}h_{M,i}s_il(r_i)e^{-j(M-1)\beta\cos(\theta_i)}\\
		\end{array}\right]. \label{w}
	\end{equation}
	By taking derivative w.r.t $\theta$, we have
	\begin{equation}
		(\frac{\partial \boldsymbol{w}}{\partial \boldsymbol{\eta}_\theta})_{m,k}=\boldsymbol{X}_{m,k}=\frac{j\beta(m-1)\sin(\theta_k) h_{m,k}s_kl(r_k)e^{-j(m-1)\beta\cos(\theta_k)}}{\sqrt{M}}. \label{dw}
	\end{equation}
	for $m\in\{1,2,\ldots,M\}$ and $k\in\{1,2,\ldots,K\}$. So, using Eq.~\ref{mse}, $\boldsymbol{CRLB}_\theta$ will be
	\begin{equation}
		\boldsymbol{CRLB}_\theta=\boldsymbol{J}^{-1}_{\theta,\theta}=\frac{\sigma_n^2}{2}(\mathcal{R}e(\boldsymbol{X}^H\boldsymbol{X}))^{-1}. \label{NCRLBcos}
	\end{equation}
	By virtue of the independence of antennas, using the following lemmas from RMT, we prove that for any distribution of $\boldsymbol{H}$, $\boldsymbol{CRLB}_{\cos(\theta)}$ almost surely converges toward a deterministic closed-form expression that is a function of system parameters, such as the number of antennas and variance of channel coefficients.
	\begin{lemma} \label{Lem1}
		Let $\boldsymbol{\Sigma} \in \mathbb{C}^{N\times N}$, be a matrix with uniformly bounded spectral norm. Let $\boldsymbol{x} \in \mathbb{C}^N$, be a random vector with i.i.d. entries of zero mean, variance $\frac{1}{N}$ and eighth order moment of order $O(\frac{1}{N^4})$, independent of $\boldsymbol{\Sigma}$. Then
		\begin{equation}
			\boldsymbol{x}^H\boldsymbol{\Sigma}\boldsymbol{x}-\frac{1}{N}tr(\boldsymbol{\Sigma})\xrightarrow{a.s.}0,
		\end{equation}
		as $N\rightarrow \infty$.
	\end{lemma}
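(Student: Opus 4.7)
My plan is to upgrade a simple mean/variance calculation to a fourth-moment estimate and conclude almost sure convergence via the Borel--Cantelli lemma. Writing
\[
Y_N \triangleq \boldsymbol{x}^H\boldsymbol{\Sigma}\boldsymbol{x}-\frac{1}{N}tr(\boldsymbol{\Sigma})
= \sum_{i}\Sigma_{ii}\bigl(|x_i|^2-\tfrac{1}{N}\bigr)+\sum_{i\neq j}\Sigma_{ij}\,x_i^* x_j,
\]
and using independence together with $\mathbb{E}[x_i]=0$ and $\mathbb{E}[|x_i|^2]=1/N$, each summand has zero mean, so $\mathbb{E}[Y_N]=0$. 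The diagonal sum is a sum of independent centered scalars, and the off-diagonal sum has zero mean by independence of $x_i$ and $x_j$.

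First I would compute $\mathbb{E}[|Y_N|^4]$ by expanding it as a sum over eight-fold multi-indices $(i_1,j_1,\ldots,i_4,j_4)$. Thanks to independence and the zero-mean of the entries, only tuples whose indices appear in a ``matched'' pairing pattern give nonvanishing expectations. Partitioning the surviving tuples according to their pairing diagrams, and using the identity $\sum_{i,j}|\Sigma_{ij}|^2 = tr(\boldsymbol{\Sigma}^H\boldsymbol{\Sigma})\le N\|\boldsymbol{\Sigma}\|^2$ together with Cauchy--Schwarz to absorb sums involving $\boldsymbol{\Sigma}$, I expect to arrive at an estimate of the form
\[
\mathbb{E}\bigl[|Y_N|^4\bigr]\le \frac{C\,\|\boldsymbol{\Sigma}\|^4}{N^2},
\]
where $C$ depends only on the (uniformly bounded) lower moments and on the hypothesis $\mathbb{E}[|x_i|^8]=O(N^{-4})$. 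This is precisely where the eighth-moment assumption is consumed: the worst-case contribution comes from the diagonal part, where eight entries can collapse onto a single summation index and must be controlled by $\mathbb{E}[|x_i|^8]$.

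The finish is routine. By Markov's inequality,
\[
\Pr\bigl(|Y_N|>\varepsilon\bigr)\le \frac{\mathbb{E}[|Y_N|^4]}{\varepsilon^4}\le \frac{C\,(\sup_N\|\boldsymbol{\Sigma}\|)^4}{\varepsilon^4\, N^2},
\]
which is summable in $N$ because the spectral norm of $\boldsymbol{\Sigma}$ is uniformly bounded. Borel--Cantelli then yields $Y_N\to 0$ almost surely, which is the claim.

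The main obstacle is the combinatorial bookkeeping in the fourth-moment expansion: one has to enumerate the possible pairing patterns of the eight summation indices, discard the ones that vanish by independence and zero-mean, and verify case by case that every surviving pattern admits the claimed $O(\|\boldsymbol{\Sigma}\|^4/N^2)$ bound rather than a larger one that would break summability. A cleaner alternative, should the direct expansion become unwieldy, is to view the off-diagonal sum as a zero-mean complex martingale in the filtration generated by $x_1,\ldots,x_N$ and to invoke Burkholder's inequality, which delivers the same fourth-moment bound with far less index juggling.
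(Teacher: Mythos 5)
Your proposal is correct and is essentially the canonical argument behind the reference the paper cites for this lemma (the trace lemma of Bai--Silverstein, reproduced in Couillet--Debbah): a fourth-moment bound of order $O(\|\boldsymbol{\Sigma}\|^4/N^2)$ on $\boldsymbol{x}^H\boldsymbol{\Sigma}\boldsymbol{x}-\frac{1}{N}tr(\boldsymbol{\Sigma})$, where the eighth-moment hypothesis controls the collapsed diagonal terms, followed by Markov and Borel--Cantelli. The paper itself gives no proof beyond the citation, so there is nothing further to compare.
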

	\begin{proof}
		See \cite{couillet2011random}.
	\end{proof}
	\begin{lemma} \label{Lem2}
		For $\boldsymbol{\Sigma} \in \mathbb{C}^{N\times N}$, be a matrix with uniformly bounded spectral norm, $\boldsymbol{x}$ and $\boldsymbol{y}$ two vectors of i.i.d. variables such that $\boldsymbol{x} \in \mathbb{C}^N$ and $\boldsymbol{y} \in \mathbb{C}^N$ have zero mean, variance $\frac{1}{N}$ and fourth order moment of order $O(\frac{1}{N^2})$, we have
		\begin{equation}
			\boldsymbol{x}^H\boldsymbol{\Sigma}\boldsymbol{y}\xrightarrow{a.s.}0.
		\end{equation}
	\end{lemma}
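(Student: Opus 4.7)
The plan is to mimic the moment-method argument that underlies Lemma~\ref{Lem1}, adapted to the bilinear form with two independent random vectors. The goal is to control a high enough moment of $\boldsymbol{x}^{H}\boldsymbol{\Sigma}\boldsymbol{y}$ so that Markov's inequality combined with the Borel--Cantelli lemma yields almost sure convergence to zero as $N\to\infty$. Since $\boldsymbol{x}$ and $\boldsymbol{y}$ are independent and centered, it will be natural to condition on $\boldsymbol{y}$ first and treat $\boldsymbol{x}^{H}\boldsymbol{\Sigma}\boldsymbol{y} = \boldsymbol{x}^{H}\boldsymbol{v}$ with $\boldsymbol{v}=\boldsymbol{\Sigma}\boldsymbol{y}$ as a linear form in $\boldsymbol{x}$, then take the expectation over $\boldsymbol{y}$ afterwards.

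First, independence and $\mathbb{E}[\boldsymbol{x}]=\boldsymbol{0}$ immediately give $\mathbb{E}[\boldsymbol{x}^{H}\boldsymbol{\Sigma}\boldsymbol{y}]=0$. For the second step, I would expand, for an arbitrary deterministic $\boldsymbol{v}\in\mathbb{C}^{N}$,
\begin{equation}
\mathbb{E}\bigl[|\boldsymbol{x}^{H}\boldsymbol{v}|^{4}\bigr]=\sum_{i,j,k,l}\mathbb{E}[\bar{x}_{i}x_{j}\bar{x}_{k}x_{l}]\,v_{i}\bar{v}_{j}v_{k}\bar{v}_{l},
\end{equation}
and, using zero mean and i.i.d.\ entries with variance $1/N$, keep only the index pairings $\{i=j,k=l\}$ and $\{i=l,j=k\}$ (of size $O(1/N^{2})$) plus the diagonal $i=j=k=l$ (of size $O(1/N^{2})$ by the fourth-moment hypothesis on $\boldsymbol{y}$, used symmetrically here on $\boldsymbol{x}$). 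Each surviving pairing contributes at most a constant multiple of $\|\boldsymbol{v}\|^{4}/N^{2}$, so
\begin{equation}
\mathbb{E}\bigl[|\boldsymbol{x}^{H}\boldsymbol{v}|^{4}\bigr]\le \frac{C\,\|\boldsymbol{v}\|^{4}}{N^{2}}
\end{equation}
for a universal constant $C$.

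Next, setting $\boldsymbol{v}=\boldsymbol{\Sigma}\boldsymbol{y}$ and taking the expectation over $\boldsymbol{y}$, I would use the bound $\|\boldsymbol{\Sigma}\boldsymbol{y}\|^{2}\le \|\boldsymbol{\Sigma}\|^{2}\|\boldsymbol{y}\|^{2}$ together with $\|\boldsymbol{\Sigma}\|$ uniformly bounded, and then control $\mathbb{E}[\|\boldsymbol{y}\|^{4}]$ by the same index-pairing argument, which yields $\mathbb{E}[\|\boldsymbol{y}\|^{4}]=O(1)$ thanks to the i.i.d.\ variance $1/N$ and fourth-moment $O(1/N^{2})$ assumptions. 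Combining,
\begin{equation}
\mathbb{E}\bigl[|\boldsymbol{x}^{H}\boldsymbol{\Sigma}\boldsymbol{y}|^{4}\bigr]=O\!\left(\frac{1}{N^{2}}\right).
\end{equation}
By Markov's inequality, $\Pr(|\boldsymbol{x}^{H}\boldsymbol{\Sigma}\boldsymbol{y}|>\varepsilon)=O(N^{-2})$ for every fixed $\varepsilon>0$, whose sum over $N$ converges; the Borel--Cantelli lemma then delivers $\boldsymbol{x}^{H}\boldsymbol{\Sigma}\boldsymbol{y}\xrightarrow{a.s.}0$.

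The main obstacle is the combinatorial bookkeeping in the fourth-moment expansion: making sure that every pairing, including the ``diagonal'' one, is genuinely of order $1/N^{2}$ under the stated moment hypotheses, and that the bound is uniform in $\boldsymbol{\Sigma}$ via its bounded spectral norm. Once this is done carefully, the rest is a standard application of Markov plus Borel--Cantelli, exactly as in the proof of Lemma~\ref{Lem1} referenced in \cite{couillet2011random}.
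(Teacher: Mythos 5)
The paper does not actually prove this lemma---it simply cites \cite{couillet2011random}---so there is no in-paper argument to compare against; what you have written is essentially the standard moment-method proof of the bilinear-form trace lemma as it appears in that reference (and in Bai--Silverstein), and it is correct. Conditioning on $\boldsymbol{y}$, the fourth-moment expansion of $|\boldsymbol{x}^{H}\boldsymbol{v}|^{4}$ does give $C\|\boldsymbol{v}\|^{4}/N^{2}$: the off-diagonal pairings contribute $(1/N)^{2}$ per term against coefficient sums bounded by $\|\boldsymbol{v}\|^{4}$, and the diagonal term is controlled by the assumed fourth moment $O(1/N^{2})$ together with $\sum_{i}|v_{i}|^{4}\le\|\boldsymbol{v}\|^{4}$; then $\mathbb{E}[\|\boldsymbol{\Sigma}\boldsymbol{y}\|^{4}]=O(1)$ by the bounded spectral norm and the same expansion for $\boldsymbol{y}$, and Markov plus Borel--Cantelli finishes. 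Two small points of bookkeeping. First, for complex entries that are not circularly symmetric there is a third surviving pairing, $\{i=k,\,j=l\}$, producing $\mathbb{E}[\bar{x}^{2}]\mathbb{E}[x^{2}]$ terms; since $|\mathbb{E}[x^{2}]|\le 1/N$ and $|\sum_{i}v_{i}^{2}|\le\|\boldsymbol{v}\|^{2}$, it is of the same $O(\|\boldsymbol{v}\|^{4}/N^{2})$ order and does not change anything, but it should be listed. Second, you implicitly (and necessarily) assume that $\boldsymbol{x}$ and $\boldsymbol{y}$ are mutually independent; the lemma as stated only says each vector has i.i.d.\ entries, and the conclusion is false without mutual independence (take $\boldsymbol{y}=\boldsymbol{x}$, which reduces to Lemma~\ref{Lem1} with limit $\tfrac{1}{N}tr(\boldsymbol{\Sigma})\ne 0$ in general). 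That hypothesis is explicit in the cited reference and is satisfied in the paper's application, where the two columns correspond to channels of distinct users, so your reading is the right one---just state it.
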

	\begin{proof}
		See \cite{couillet2011random}.
	\end{proof}
	\begin{lemma} \label{Lem3}
		For large $M$, when $\cos(\theta_1)\ne\cos(\theta_2)$, for $\boldsymbol{\Sigma}=diag(\frac{1}{M^2},\frac{4}{M^2},\ldots,1)$, we have 
		\begin{equation}
			\mathcal{R}e\{\boldsymbol{a}_{Rx}(\theta_1)^H\boldsymbol{\Sigma}\boldsymbol{a}_{Rx}(\theta_2)\}\xrightarrow{a.s.}0. \label{asd}
		\end{equation}
	\end{lemma}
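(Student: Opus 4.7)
The plan is to evaluate the bilinear form in \eqref{asd} in closed form and exhibit $O(1/M)$ decay. Using \eqref{aTx} and the fact that the $m$-th diagonal entry of $\boldsymbol{\Sigma}$ is $m^{2}/M^{2}$, a direct expansion gives
\begin{equation}
\boldsymbol{a}_{Rx}(\theta_1)^{H}\boldsymbol{\Sigma}\,\boldsymbol{a}_{Rx}(\theta_2) \;=\; \frac{1}{M^{3}}\sum_{m=1}^{M} m^{2}\,e^{-j(m-1)\gamma},
\end{equation}
where $\gamma \triangleq \beta(\cos\theta_2-\cos\theta_1)$. The assumption $\cos\theta_1\neq\cos\theta_2$, combined with the standard $d\leq\lambda/2$ antenna spacing implicit in the system model, guarantees $e^{-j\gamma}\neq 1$, so that $|1-e^{-j\gamma}|>0$.

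The naive estimate $\bigl|\sum_{m=1}^{M} m^{2} e^{-j(m-1)\gamma}\bigr|\leq\sum_{m=1}^{M} m^{2}=O(M^{3})$ is clearly too weak, and the key step is to extract cancellation from the oscillatory factor. For this I would apply Abel's summation-by-parts with $a_m=m^{2}$ and $b_m=e^{-j(m-1)\gamma}$. The partial sums $B_n\triangleq\sum_{m=1}^{n} b_m$ form a finite geometric series and are therefore uniformly bounded:
\begin{equation}
|B_n|\;=\;\left|\frac{1-e^{-jn\gamma}}{1-e^{-j\gamma}}\right|\;\leq\;\frac{2}{|1-e^{-j\gamma}|}\;=:\;C_\gamma,
\end{equation}
independently of $n$. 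Abel's identity then rearranges the sum as
\begin{equation}
\sum_{m=1}^{M} m^{2}\,e^{-j(m-1)\gamma}\;=\;M^{2}\,B_{M}\;-\;\sum_{m=1}^{M-1}(2m+1)\,B_{m},
\end{equation}
whose modulus is bounded by $C_\gamma\bigl(M^{2}+\sum_{m=1}^{M-1}(2m+1)\bigr)=C_\gamma\bigl(M^{2}+(M^{2}-1)\bigr)=O(M^{2})$.

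Dividing by $M^{3}$ and taking the real part shows that the left-hand side of \eqref{asd} is $O(1/M)$, hence it converges deterministically (and \emph{a fortiori} almost surely) to zero as $M\to\infty$. The only nontrivial point is engineering the cancellation of the oscillatory factor against the growing weight $m^{2}$, and Abel's summation accomplishes this essentially for free by transferring the $m^{2}$ growth onto the uniformly bounded partial sums $B_m$. Note that, unlike Lemmas \ref{Lem1} and \ref{Lem2}, no probabilistic input is required here: both $\boldsymbol{a}_{Rx}(\theta_i)$ and $\boldsymbol{\Sigma}$ are deterministic, and the ``a.s.'' notation in \eqref{asd} is kept purely for stylistic consistency with the RMT lemmas that will be combined with this result later.
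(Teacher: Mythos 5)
Your proof is correct, and it reaches the same pivotal reduction as the paper --- namely that the bilinear form equals $\frac{1}{M^{3}}\sum_{m} m^{2}e^{-j(m-1)\gamma}$ with $\gamma$ proportional to $\cos\theta_1-\cos\theta_2$ --- but it handles the two remaining steps differently. Where you prove the vanishing of the oscillatory sum from scratch via Abel summation (a clean, self-contained argument that even yields an explicit $O(1/M)$ rate), the paper simply cites Stoica et al.\ for the limit $\frac{1}{M^{3}}\sum_{m}m^{2}\cos(m\gamma)\to 0$ when $\gamma\neq 0$. Conversely, where you dismiss the ``a.s.''\ qualifier as stylistic, the paper takes it literally: it treats $\theta_1,\theta_2$ as uniform random variables on $[0,\pi]$, notes that $P(\cos\theta_1=\cos\theta_2)=0$, and concludes almost-sure convergence over the randomness of the AoAs. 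Your reading is the more natural one given that the lemma already \emph{hypothesizes} $\cos\theta_1\neq\cos\theta_2$, under which the statement is indeed deterministic; the paper's probabilistic layer is only needed if one wants the conclusion without conditioning on distinct AoAs. One small caveat applies to both arguments: the true requirement is $\gamma\notin 2\pi\mathbb{Z}$, not merely $\gamma\neq 0$. With $d=\lambda/2$ (so $\beta=\pi$) and the endfire pair $\{\theta_1,\theta_2\}=\{0,\pi\}$ one gets $\gamma=\pm 2\pi$, $e^{-j\gamma}=1$, and the sum does \emph{not} vanish; your claim that $d\le\lambda/2$ ``guarantees'' $|1-e^{-j\gamma}|>0$ is therefore slightly too strong at the boundary, though this is a measure-zero configuration that the paper also excludes in practice.
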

	\begin{proof}
		See Appendix \ref{Appl3}.
	\end{proof}
	Using these lemmas and independence of the channel distribution from the number of antennas at the BS, the following theorem gives the deterministic expression for $\boldsymbol{CRLB}_\theta$.
	\begin{theorem} \label{T1}
		In a Massive MIMO system with large number of antennas, $\boldsymbol{CRLB}_\theta$ converges toward a deterministic form as
		\begin{equation}
			\boldsymbol{CRLB}_{\theta}\xrightarrow{a.s.}\frac{3}{\beta^2(M-1)(2M-1)}\boldsymbol{S}, \label{CRLBAStheta}
		\end{equation}	
		in which $\boldsymbol{S}$ is an $K\times K$ diagonal matrix with
		\begin{equation}
			\boldsymbol{S}_{k,k}=((|h_{d_k}|^2+2\sigma_h^2)\rho_k\sin^2(\theta_k))^{-1}, \hspace{3mm} k\in\{1,\ldots, K\}.
		\end{equation}
	\end{theorem}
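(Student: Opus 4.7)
The plan is to work directly with equation (16), which writes $\boldsymbol{CRLB}_\theta=\tfrac{\sigma_n^2}{2}(\mathcal{R}e(\boldsymbol{X}^H\boldsymbol{X}))^{-1}$, and to show that $\mathcal{R}e(\boldsymbol{X}^H\boldsymbol{X})$ converges almost surely to a diagonal matrix whose inverse (times $\sigma_n^2/2$) matches the claimed right-hand side. A direct expansion of (17) yields
\begin{equation*}
(\boldsymbol{X}^H\boldsymbol{X})_{k_1,k_2}=\frac{\beta^2\sin\theta_{k_1}\sin\theta_{k_2}\,s_{k_1}^*s_{k_2}\,l(r_{k_1})l(r_{k_2})}{M}\sum_{m=1}^{M}(m-1)^2\,h_{m,k_1}^*h_{m,k_2}\,e^{j(m-1)\beta(\cos\theta_{k_1}-\cos\theta_{k_2})},
\end{equation*}
so the proof naturally splits into a diagonal case ($k_1=k_2$) treated with Lemma~\ref{Lem1}, and an off-diagonal case ($k_1\ne k_2$) handled with Lemmas~\ref{Lem2} and~\ref{Lem3}.

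For the diagonal entries I would decompose $h_{m,k}=\bar h_k+\tilde h_{m,k}$ with $\tilde h_{m,k}$ i.i.d., zero-mean, of complex variance $2\sigma_h^2$, then rescale to $\boldsymbol{u}_k=\tilde{\boldsymbol h}_k/\sqrt{2M\sigma_h^2}$ so its entries have variance $1/M$. Choosing $\boldsymbol{\Sigma}=\text{diag}((m-1)^2/(M-1)^2)$ (unit spectral norm), Lemma~\ref{Lem1} gives $\boldsymbol{u}_k^H\boldsymbol{\Sigma}\boldsymbol{u}_k\xrightarrow{a.s.}\text{tr}(\boldsymbol{\Sigma})/M$, and the mean--fluctuation cross terms are killed by a second application of Lemma~\ref{Lem2} (with a deterministic vector on one side). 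Combined with the closed form $\sum_{m=1}^{M}(m-1)^2=(M-1)M(2M-1)/6$ and $\mathbb{E}|h_{m,k}|^2=|h_{d_k}|^2+2\sigma_h^2$ (from (8)), this yields
\begin{equation*}
(\boldsymbol{X}^H\boldsymbol{X})_{k,k}\xrightarrow{a.s.}\frac{\beta^2(M-1)(2M-1)}{6}\sin^2\theta_k\,|s_k|^2l(r_k)^2\,\bigl(|h_{d_k}|^2+2\sigma_h^2\bigr).
\end{equation*}

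For the off-diagonal entries ($k_1\ne k_2$) I would apply the same mean--fluctuation split to both channel vectors and expand $h_{m,k_1}^*h_{m,k_2}$ into four pieces. The three pieces containing at least one fluctuation become quadratic forms of the type $\boldsymbol{x}^H\boldsymbol{\Sigma}\boldsymbol{y}$ with independent $\boldsymbol{x}\ne\boldsymbol{y}$ (or with a deterministic vector on one side), so Lemma~\ref{Lem2} makes them vanish a.s.\ after the same $\sqrt{2M\sigma_h^2}$ normalization. The remaining deterministic piece is proportional to $\bar h_{k_1}^*\bar h_{k_2}\sum_m(m-1)^2 e^{j(m-1)\beta(\cos\theta_{k_1}-\cos\theta_{k_2})}/M$, which after a harmless re-indexing is precisely the quantity $\boldsymbol{a}_{Rx}(\theta_{k_1})^H\boldsymbol{\Sigma}\boldsymbol{a}_{Rx}(\theta_{k_2})$ treated by Lemma~\ref{Lem3}. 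Taking $\mathcal{R}e$ kills this piece under the hypothesis $\cos\theta_{k_1}\ne\cos\theta_{k_2}$. Thus $\mathcal{R}e(\boldsymbol{X}^H\boldsymbol{X})$ is asymptotically diagonal; inverting its diagonal entries and substituting $\rho_k$ from (9) delivers $\boldsymbol{S}_{k,k}^{-1}$ as claimed.

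The main obstacle is the bookkeeping at the interface with the RMT lemmas: the entries $h_{m,k}$ have a nonzero mean $\bar h_k$ and a variance $2\sigma_h^2$ that do not scale with $M$, whereas Lemmas~\ref{Lem1}--\ref{Lem2} require zero-mean entries of variance $1/N$. Extracting the mean as a deterministic offset, introducing the $\sqrt{2M\sigma_h^2}$ rescaling, and ensuring that the spectral norms of the weight matrices $\boldsymbol{\Sigma}$ stay bounded as $M\to\infty$ must all be organized simultaneously; each of the cross terms produced by the mean/fluctuation split must then be routed to exactly one of Lemmas~\ref{Lem1}--\ref{Lem3}. Once this decomposition is laid out cleanly, the theorem follows directly by combining the four limits and a continuous-mapping argument for the inverse of the asymptotically diagonal matrix.
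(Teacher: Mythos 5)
Your proposal is correct and follows essentially the same route as the paper's Appendix~\ref{App1}: split each $h_{m,k}$ into its deterministic mean $\bar h_k$ plus a zero-mean fluctuation, rescale so Lemma~\ref{Lem1} gives the diagonal quadratic forms as $\tfrac{1}{M}tr(\boldsymbol{\Sigma})$, kill the independent cross and off-diagonal forms with Lemma~\ref{Lem2}, kill the deterministic steering-vector terms with Lemma~\ref{Lem3}, and invert the asymptotically diagonal matrix. The only differences are cosmetic: the paper does the bookkeeping on the real and imaginary parts as separate real matrices $\boldsymbol{A},\boldsymbol{B}$ and explicitly verifies the fourth/eighth-moment scaling required by the lemmas, whereas you work with the complex entries directly and are actually more careful about the mean--fluctuation cross terms, which the paper silently omits in Eq.~\ref{reform2}.
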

	\begin{proof}
		See Appendix \ref{App1}.
	\end{proof}
	
	It is seen from Eq.~\ref{CRLBAStheta} that in a Massive MIMO system, regardless of channel distribution, instantaneous $CRLB$ tends toward a deterministic value. Although other definitions like  Miller-Chang version \cite{miller1978modified} or $MCRLB$ \cite{d1994modified} obtain a deterministic form for $CRLB$, they use expectation that requires the knowledge of channel coefficients' distribution. By contrast, our obtained expression requires only the knowledge of channel coefficients' variance and is applicable even when the distribution of channel coefficients is unknown.
	
	$CRLB$ in a Massive MU-MIMO system is inversely related to the second-order of the number of antennas that is in accordance with previous analysis of these systems in \cite{fan2018angle,abu2018error}. When the number of antennas is large enough, the effects of the multipath signals that accompany the dominant path signal are in terms of general channel statistics instead of their individual realization, allowing designers to understand the amount of information they can recover from the channel. Also, the $\sigma_h^2$ term in Eq.~\ref{CRLBAStheta} prevents the $CRLB$ from growing indefinitely when the dominant path-signal is in poor condition. This is because as the number of antennas grows, it is almost surely improbable that all antennas be in poor condition simultaneously. So, even if few antennas have a low fading coefficient, not all of the information in the system is lost, and AoA can still be extracted. The same phenomenon also happens for channel capacity in these systems \cite{bjornson2015optimal}.
	
	Several works have recently attempted to use path-loss signals with different channel coefficients to improve their estimation accuracy, such as \cite{li2019massive,mahler2016tracking}. One of the main uses of multipath signals is in Fingerprinting localization method\cite{li2019massive,steiner2011efficient}. In this method, the area in which localization is going to be done is divided into separate subsections. Each subsection is distinguished from others based on the channel that BS sees from that subsection. Here, if the path-loss signals are considered, subsections can be sharply distinguished from one another. Moreover, by considering the multipath signals In \cite{Sibren2020NN} authors extracted location information (including AoA) with weak dominant path signal, using neural networks. In \cite{zhao2017tone} authors presented a method to extract AoA in i.i.d. Rician channels. These works are examples of how the multipath signals can be used to extract or refine the AoA information. Our results are in accordance with these works and provide the theoretical background that $CRLB$ in such scenarios is finite, and estimating AoA is possible.
	
	In the rest of the paper, we focus on the system's energy efficiency in the AoA estimation part and how the utilization of antennas can affect the efficiency. This is important when one wants to use a subset of total available antennas for AoA estimation.
	
	\section{Localization Efficiency} \label{LocalizationEfficiency}
	In this section, we formulate $LE$ function, which will aggregate the benefits and costs of a localization method. To make a general criterion for different kinds of localization, we use parameters that are included in most of the localization methods. On the one hand, these parameters are the number of simultaneously localized UTs and the method's localization accuracy as benefit parameters. On the other hand, the system's total energy consumption in the localization phase is a cost parameter.
	\subsection{Accuracy Function}
	Accuracy is one of the major evaluation parameters in localization \cite{li2013robust,sallouha2019localization}. Different works have studied the accuracy function of localization methods and optimized it w.r.t. various parameters. Generally, accuracy is defined as the trace of the inverse of equivalent FIM \cite{li2013robust,shen2010fundamental}, or as the inverse of square root, the trace of the $CRLB$ matrix \cite{sallouha2019localization}. As we have achieved a deterministic expression for $CRLB$, we use the inverse of the square root of its trace for the accuracy function 
	\begin{equation}
		Accuracy=\frac{1}{\sqrt{tr(\boldsymbol{CRLB}_\theta)}}. \label{acc}
	\end{equation}
	\subsection{Energy Consumption}
	Nowadays, one of the critical parameters of a wireless system is energy consumption \cite{bjornson2015optimal}. Due to growing concerns about energy, designers have to carefully consider their systems' energy consumption and include it in system characterization. Analyzing the energy consumption of a system indicates at what cost a performance is obtained. For example, wireless systems' energy efficiency has been widely used to describe performance trade-off between rate and energy consumption \cite{bjornson2015optimal,arash2017employing}. 
	
	In order to conduct a comprehensive investigation, it is of paramount importance to consider the energy consumption of all parts of a wireless system. In addition to transmitted power, in a Massive MIMO system, the energy consumption of system hardware should also be considered to obtain a comprehensive model \cite{bjornson2015optimal}. For instance, the energy consumption of antennas' RF-chains and processing units that scale with the number of antennas is not negligible in Massive MIMO systems. In the following, we investigate different parts of the total energy consumption function according to our system model.
	
	\subsubsection{Transmitted Energy}
	In the uplink of a wireless system, UTs transmit pilots to become localized by the BS. Usually, this energy is vital as UTs have a limited energy budget. This is one of the main reasons which prevents broad utilization of methods like GPS in the next generation of wireless systems. Pilot signals are predefined with a certain energy. This energy is linearly related to the number of UTs. Therefore, transmitted energy will be
	\begin{equation}
		E_{tr}=\zeta W \frac{tr(\boldsymbol{s}\boldsymbol{s}^H)}{\omega}=\frac{W\zeta}{\omega} \sum_{i=1}^{K}|s_i|^2\hspace{2mm}(J), \label{Ptr}
	\end{equation}
	in which $\zeta$ and $W$ are duration and bandwidth of transmitted pilots, respectively, and $\omega\in(0,1]$ is UT's RF amplifier efficiency (and it is constant during pilot transmission time) \cite{bjornson2014massive}.
	\subsubsection{Processing Energy}
	We assume that the BS carries out all of the required processing. As far as LE is concerned, these processes include the detection of pilots and running localization algorithm. This energy is proportional to the number of operations, which in turn is a function of system parameters such as $M$ and $K$. To evaluate this part's energy consumption, one needs to calculate the number of required operations of an algorithm and the computational efficiency of BS processing hardware. Generally, Maximum-likelihood (ML) method that obtains $CRLB$ has the calculation complexity of $K^M$ \cite{gurbuz2020crlb}. This is the worst case, and there may be ways to reduce the number of required calculations. Sub-optimum algorithms have $M^3$ order of complexity, at most, but they do not necessarily obtain $CRLB$. We study one of these algorithms in section \ref{music}. Hardware computational efficiency, $L_{BS} (FLOP/J)$, is usually expressed as the number of Floating Point Operations (FLOP) per Second per Watt that hardware consumes \cite{bjornson2015optimal}. So, assuming the same time as pilot transmission is used for processing, the processing energy consumption, $E_p$, for ML will be formulated as 
	\begin{equation}
		E_p=\frac{K^M}{L_{BS}}W\zeta \hspace{3mm}(J). \label{pp}
	\end{equation}
	\subsubsection{Hardware Energy Consumption}
	Generally, the hardware of a wireless system can be divided into two parts: 
	\begin{itemize}
		\item Infrastructure part of a system that includes backhaul systems and network part, static circuit energy consumption and so on, which use constant energy and are necessary for system maintenance. Energy consumption of this part is independent from the number of antennas at the BS but may depend on the number of UTs \cite{bjornson2014massive}.
		\item RF-chains of BS and UTs' antennas that are proportional to the number of BS antennas and number of UTs.
	\end{itemize} 
	Therefore, hardware energy consumption, $E_h$ will be
	\begin{equation}
		E_h=\zeta(MP_{BS}+KP_{UT}+P_{fix})\hspace{3mm}(J), \label{Ph}
	\end{equation}
	where $P_{BS}$ and $P_{UT}$ are the power consumption of the hardware of each antenna (like RF-chains) in BS and UTs, respectively, and $P_{fix}$ accounts for all powers that are not related to $M$. 
	
	Therefore, total energy consumption will be
	\begin{equation}
		E_t=E_{tr}+E_p+E_h. \label{Pt}
	\end{equation}
	Finally we can formulate $LE$ as
	\begin{equation}
		LE=\frac{K(Accuracy)}{Energy Consumption}=\frac{K}{(E_t)(\sqrt{tr(\boldsymbol{CRLB}_\theta)})}. \label{LE1}
	\end{equation}
	Replacing all equations by their formula, LE will be obtained as
	\begin{equation} 
		LE=\frac{K\sqrt{\beta^2(M-1)(2M-1)}}{\sqrt{3}\zeta(\frac{W}{L_{BS}}K^M+MP_{BS}+KW \sum_{i=1}^{K}\|s_i\|^2+KP_{UT}+P_{fix})\sqrt{tr(\boldsymbol{S})}}. \label{LEp1}
	\end{equation}
	In this equation, $LE$ is a function of several system parameters, such as $M$, $K$, and $\theta$. Although designers usually cannot control some parameters such as UT's AoA or their number, they do have access to the number of BS antennas. As $LE$ reflects a trade-off between accuracy and energy consumption, this creates an opportunity that can be used to design a system that operates at the optimal point of this trade-off. For example, a common scenario that happens, in reality, is that number of UTs varies drastically at different times, and it is not necessarily efficient to use all of the antennas for any number of UTs. In this case, to maximize LE, BS should use a portion of the total available antennas.
	
	In order to optimize Eq.~\ref{LEp1} w.r.t. the number of antennas, first, the method that will select the utilized antennas should be devised. This is because how the antennas are selected changes the formulation of LE and consequently changes the number of optimal antennas. Therefore, first we study if the optimal number of antennas is fewer than the total available antennas, what is the optimal antenna selection strategy. Next, with the help of an optimal selection strategy, we achieve the formulation of LE when antenna selection is employed. Then, we can optimize the number of optimal antennas.
	
	\section{Antenna selection} \label{Antennaselection}
	In this section, we analyze the effect of antenna selection in a Massive MIMO system for localization. When the number of utilized antennas is smaller than total available antennas, e.g., due to $LE$ optimization or fewer available RF-chains, there is an opportunity that if a specific set of antennas are deployed, $LE$ can be improved even more. Energy consumption of the system in Eq.~\ref{Pt} is only a function of the number of antennas and is independent from which antennas are being used. On the other hand, in addition to the number of antennas, $\boldsymbol{CRLB}_\theta$ is a function of the set of antennas that are being used. To show this, we recall Eq.~\ref{sdsd} that states $\boldsymbol{CRLB}_\theta$ is inversely proportional to $tr(\boldsymbol{\Sigma})$. Assuming optimal number of antennas (or available RF-chains) is $F$, to minimize $\boldsymbol{CRLB}_\theta$, $tr(\boldsymbol{\Sigma})$ has to be maximized by choosing a subset of utilized antennas $\mathcal{S}$. It should be noted that in this case, the $\frac{1}{M}$ and $\frac{1}{M^2}$ coefficients in Eq.~\ref{a+b} and Eq.~\ref{Sig}, are changed to $\frac{1}{F}$ and $\frac{1}{F^2}$, respectively. So, we have
	\begin{align}
		&\max_{\mathcal{S}\subset \{1,\ldots, M\}}\sum_{x\in \mathcal{S}}\frac{(x-1)^2}{F^2} \label{opt} \\& s.t. \hspace{2mm} Card(\mathcal{S})=F.\nonumber 
	\end{align}
	Optimal solution for this problem consists of the last $F$ antennas 
	\begin{equation}
		\mathcal{S}^*=\{(M-F),\ldots, M-1\}, \label{Sopt}
	\end{equation}
	that results in the maximum value for the trace as
	\begin{align}
		\sum_{x=M-F}^{M-1}\frac{x^2}{F^2}=\frac{6M(M-F-1)+(F+1)(2F+1)}{6F}. \label{maxtr}
	\end{align}
	and minimum $\boldsymbol{CRLB}_\theta$ for $\mathcal{S}^*$ as
	\begin{align}
		\boldsymbol{CRLB}_{\theta}^*\xrightarrow{a.s.}\frac{3}{\beta^2(6M(M-F-1)+(F+1)(2F+1))}\boldsymbol{S}. \label{CRLBASthetaopt}
	\end{align}
	The interpretation of Eq.~\ref{Sopt} is that if for any reason fewer antennas than available antennas should be used, the optimal choice is to start selecting antennas from the furthest antenna w.r.t. the reference point (whose location is fixed at the top of the antenna array) and move toward it. We recall this set of antennas as the \emph{furthest set}. With this approach, $\boldsymbol{CRLB}_\theta$ will be dramatically reduced relative to the case when we choose antennas from the beginning of the array. For comparison, we write $\boldsymbol{CRLB}_{\theta}$ when $F$ first antennas (\emph{first set}) are used
	\begin{equation}
		\boldsymbol{CRLB}_{\theta}\xrightarrow{a.s.}\frac{3}{\beta^2(F-1)(2F-1)}\boldsymbol{S}. \label{CRLBASthetaS1}
	\end{equation}
	
	It can be seen from Eq.~\ref{CRLBASthetaopt} and Eq.~\ref{CRLBASthetaS1} that while $\boldsymbol{CRLB}_\theta$ for the first set is only a function of $F$, it is a function of both $F$ and $M$ for the furthest set. In other words, the first set antenna selection approach does not fully appreciate the presence of a large antenna array. Furthermore, $\boldsymbol{CRLB}_\theta$ for the first set is a \emph{decreasing} function of $F$, however, $\boldsymbol{CRLB}^*_\theta$ of the furthest set will be an \emph{increasing} function of $F$. The reason for this phenomenon lies within the $M^{-\frac{1}{2}}$ normalization factor in Eq.~\ref{aTx} (which becomes $F^{-\frac{1}{2}}$ in the antenna selection scenario) that finally results in the normalization of $tr(\boldsymbol{\Sigma})$. When we start adding antennas from the end of the array, we start from an antenna with the largest contribution to the trace, $1$, and minimum normalization cost, $1$. Then, a smaller value is added, but as the summation is normalized by the cardinality of the set, it will decrease because for any $1\le i\le F$,
	\begin{equation}
		1>\frac{1+(1-i/F^2)^2}{2}. \label{Obv1}
	\end{equation}
	Therefore, the denominator of $\boldsymbol{CRLB}_{\theta}$ decreases, and in turn, $\boldsymbol{CRLB}_{\theta}$ increases. This process is reversed for the first set of antennas. In this case, we start from an antenna with the lowest contribution to the trace and add higher values as we use more antennas. So, the normalized summation is increasing because for any $1\le i\le F$,
	\begin{equation}
		\frac{1}{F^2}<\frac{1+(1+i)^2}{2F^2}. \label{Obv2}
	\end{equation}
	Moreover, the physical explanation of this antenna selection strategy can further clarify the behavior of its $\boldsymbol{CRLB}_{\theta}$. The further an antenna is from the reference point, the more its received signal differs from the signal received in the reference point. This is because it travels through a longer path and has more time to differ from the reference point's signal. On the other hand, due to the $F^{-\frac{1}{2}}$ normalization factor, the total collected power of the system is normalized with the number of utilized antennas. From $\boldsymbol{CRLB}_{\theta}$ point of view, the system prefers to collect all of the power from the antennas, which provides the maximum possible difference from the reference point. Therefore, when antennas are selected from the end of the array, the system collects its normalized received power from antennas that provide the maximum possible difference with the signal received by the reference point. Furthermore, adding more antennas with this strategy means using more antennas that are relatively closer to the reference point than the last antenna. So, some of the power is collected from the antennas with relatively less difference (due to their shorter path) compared to the last antenna. This explains why the system's performance, in terms of $\boldsymbol{CRLB}_{\theta}$, degrades when more antennas are selected.
	
	If we omit the normalization factor, $\boldsymbol{CRLB}_{\theta}^*$ will be a decreasing function of $F$, just with a different slope of $\boldsymbol{CRLB}_{\theta}$. In other words, this antenna selection strategy reduces the slope of the $CRLB$ decrease but also reduces its initial value significantly. The decrement in the initial point is large enough that for any $F<M$, from Eq.~\ref{CRLBASthetaopt} and Eq.~\ref{CRLBASthetaS1}, it is evident that $\boldsymbol{CRLB}_{\theta}^*<\boldsymbol{CRLB}_{\theta}$, proving that selecting antennas from the furthest set is always beneficial.
	
	So, depending on the method that we select operating antennas, both $CRLB$'s formula and behavior w.r.t. the number of utilized and all available antennas is changed. In this regard, using a set of antennas that are furthest from the reference point minimizes $CRLB$. 
	
	If $F$ furthest antennas in the array are selected for localization, $LE$'s formula will be changed to
	\begin{equation}
		LE_S=\frac{K\sqrt{\beta^2(6M(M-F-1)+(F+1)(2F+1))}}{\sqrt{3}\zeta(\frac{W}{L_{BS}}K^F+FP_{BS}+KW \sum_{i=1}^{K}\|s_i\|^2+KP_{UT}+P_{fix})\sqrt{tr(\boldsymbol{S})}}. \label{LEl1}
	\end{equation}
	Now that the formula of LE is obtained according to the optimal antenna selection strategy, the following theorem gives the number of optimal antennas when this antenna selection strategy is used.
	\begin{theorem} \label{T3}
		When operating antennas are selected from the furthest set, optimal number of them is
		\begin{equation}
			F^*=K+1. \label{Fopt}
		\end{equation}	
	\end{theorem}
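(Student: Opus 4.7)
The plan is to show that $LE_S$, viewed as a function of $F$ over its feasible range, is strictly decreasing, so that the maximum is attained at the smallest admissible value of $F$, and then to identify this value as $K+1$.

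First, I would examine the $F$-dependent part of the numerator in Eq.~\ref{LEl1}. Setting $N(F)^2 \triangleq 6M(M-F-1)+(F+1)(2F+1) = 2F^2 + (3-6M)F + (6M^2-6M+1)$, differentiation gives $\tfrac{d}{dF}N(F)^2 = 4F + 3 - 6M$, which is strictly negative for every $F \le M$ (using $M \ge 2$). Hence $N(F)$ is strictly decreasing in $F$, consistent with the qualitative discussion following Eq.~\ref{CRLBASthetaopt} that $\boldsymbol{CRLB}_{\theta}^{*}$ grows when more antennas are added from the furthest set.

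Second, I would examine the denominator of Eq.~\ref{LEl1}. The only $F$-dependent terms are $\tfrac{W}{L_{BS}}K^F$ (exponentially increasing in $F$, since $K \ge 1$) and $FP_{BS}$ (linearly increasing); every remaining contribution is an $F$-independent constant in the bracket. Hence the denominator is strictly increasing in $F$. Combining the two steps, $LE_S(F)$ is a strictly positive and strictly decreasing function of $F$ on its feasible range, so the optimum is located at the smallest admissible value of $F$.

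Finally, I would identify that minimum. For the Fisher information matrix $\boldsymbol{J}_{\theta,\theta} = \tfrac{2}{\sigma_n^2}\mathcal{R}e(\boldsymbol{X}^H\boldsymbol{X})$ underpinning Eq.~\ref{CRLBASthetaopt} to be invertible almost surely over the random channel realizations, strictly more selected antennas than unknown AoAs are needed, which yields $F \ge K+1$. Substituting into the monotonicity result gives $F^{\star}=K+1$. The main obstacle in converting this plan into a rigorous proof is precisely the justification of this boundary: one must argue that $F = K$ is insufficient for almost-sure invertibility of $\mathcal{R}e(\boldsymbol{X}^H\boldsymbol{X})$ in the i.i.d.\ multipath model, while simultaneously checking that the deterministic limit of Theorem~\ref{T1} still applies when $F$ is as small as $K+1$ rather than of order $M$. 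The monotonicity argument itself is a direct calculation and should not present any difficulty.
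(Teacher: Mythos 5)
Your proposal is correct and follows essentially the same route as the paper: the paper's own proof simply observes that both $E_t$ and $\boldsymbol{CRLB}_{\theta}^{*}$ are increasing in $F$, so $LE_S$ is maximized at the smallest admissible $F$, namely $K+1$. You merely make the monotonicity explicit by differentiating $6M(M-F-1)+(F+1)(2F+1)$, and your closing caveats (justifying $F\ge K+1$ via invertibility of the FIM, and the validity of the deterministic limit at $F=K+1$) are legitimate points that the paper itself asserts without proof rather than gaps in your argument relative to it.
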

	\begin{proof}
		Eq.~\ref{Pt} and Eq.~\ref{CRLBASthetaopt} clearly show that both $E_t$ and $CRLB$ are increasing functions of $F$. Therefore, maximum value of LE in Eq.~\ref{LEl1} happens for the minimum possible value of $F$ which is $K+1$.
	\end{proof}
	\begin{remark} 
		It should be noted that the furthest antenna selection strategy is optimal as long as the steering vector can be modeled as Eq.\ref{aTx}. In this equation, it is assumed that the array size is much smaller than the distance of UTs from it, so the difference in power of arrival is negligible. When the array becomes too large w.r.t the distance of the UTs from it, the difference in power may be so large that it affects antennas' contribution in different distances from the reference point. Also, for large size arrays, the incident wave cannot be modeled as a plane. So, the proposed antenna selection is optimal as long as the distance from the reference point to the last antenna at the BS is much smaller than the distance of UTs from the BS, and Eq.~\ref{aTx} models the steering vector.
	\end{remark}
	When the effects of the multipath signals are ignored, i.e., $\sigma_h^2\to0$, it is seen that $\boldsymbol{CRLB}_{\theta}$ is still proportional with the inverse of $tr(\boldsymbol{\Sigma})$, which means our results for antenna selection is applicable for this setting, too. Therefore, no matter what the channel model is, by using the proposed antenna selection method, $\boldsymbol{CRLB}_{\theta}$ can be minimized, and $LE$ can be further improved. In the next section, we present a case study of how the $LE$ can be formulated for an estimation algorithm.
	
	\section{MUSIC Algorithm} \label{music}
	In order to study antenna selection effects on system performance when the effects of multipath signals are not considered, we use it for one of the most known algorithms for AoA estimation, MUSIC, which is being used in several applications \cite{wang2018angle}. We study its $LE$ and how antenna selection improves it. This helps to clarify that antenna selection is beneficial, no matter what the channel model is. After a brief description of its procedure, we calculate the exact amount of calculations that are required by this algorithm and formulate its $LE$. Then in section~\ref{Numerical} we compare $LE$ when antennas are selected from the furthest and first set. 
	\subsection{Procedure}
	Consider that we have a received signal as
	\begin{equation}
		\boldsymbol{y}_F=\boldsymbol{A}_F\boldsymbol{s}+\boldsymbol{n}_F, \label{recived signalMU}
	\end{equation}
	subscript $F$ shows number of rows, as $F$ antennas are deployed. Sample covariance matrix of this signal can be obtained as \cite{zhang2010direction} 
	\begin{equation}
		\tilde{\boldsymbol{R}}_y=\frac{1}{N}\sum_{i=1}^{N}\boldsymbol{y}_{F_i}\boldsymbol{y}_{F_i}^H. \label{tildeR}
	\end{equation}
	In the Eigenvalue Decomposition (EVD) of $\tilde{\boldsymbol{R}}_y$, there will be $K$ eigenvectors corresponding to $K$ UTs and $F-K$ eigenvectors corresponding to the noise. Each noise eigenvector is orthogonal to the columns of $\boldsymbol{A}$. So, by forming $\boldsymbol{E}_n$, composed of noise eigenvectors,
	\begin{equation}
		\boldsymbol{E}_n=[\boldsymbol{v}_{K+1} \hspace{3mm}\boldsymbol{v}_{K+2}\hspace{3mm} \ldots\hspace{3mm} \boldsymbol{v}_{F}], \label{eig5}
	\end{equation}
	we can form a spatial spectrum function as \cite{stoeckle2015doa}
	\begin{equation}
		P(\theta_i)=\frac{1}{\boldsymbol{g}(\theta_i)^H\boldsymbol{E}_n\boldsymbol{E}_n^H\boldsymbol{g}(\theta_i)}, \label{Spec}
	\end{equation}
	in which $\boldsymbol{g}$ is a subset of $\boldsymbol{a}_{Rx,k}$ for the antenna subset that is being used, e.g., if the furthest set of antennas is used it will be
	\begin{align}
		\boldsymbol{g}(\theta_i)=\frac{1}{\sqrt{F}}[e^{-j(M-F)\beta\cos(\theta_i)}\hspace{2mm} \ldots \hspace{2mm}e^{-j(M-1)\beta\cos(\theta_i)}]^T, \hspace{4mm}
		\theta_i\in[0,\frac{\pi}{Q}, \ldots,\frac{\pi-1}{Q}]\label{gteta} ,
	\end{align}
	where $Q$ is search cardinality. Peaks of $P(\theta)$ happens when $\boldsymbol{g}(\theta)$ corresponds to one of the actual steering vectors. These peaks are estimated as AoAs. Therefore, steps of MUSIC are
	\begin{itemize}
		\item[i)] Observe $N$ snapshots and construct $\tilde{\boldsymbol{R}}_y$ in Eq.~\ref{tildeR}. 
		
		\item[ii)] Calculate the EVD of $\tilde{\boldsymbol{R}}_y$ and extract $\boldsymbol{E}_n$ in Eq.~\ref{eig5}. 
		
		\item[iii)] Construct $P(\theta)$ and extract its maximum points.
	\end{itemize}
	\subsection{Energy Consumption of MUSIC}
	To accurately determine the required energy consumption of the MUSIC algorithm, we first analyze the number of its required operations. With the help of \cite{boyd2004convex}, we evaluate the number of required arithmetic operations to run MUSIC. 
	
	In the first step, the algorithm multiplies an $F\times 1$ vector by its conjugate transpose $N$ times, sum them, and then divides all of its elements by $N$. Every product of the vectors requires $F^2$ operations, every matrix sum requires $F^2$ operations, and the final divide needs $F^2$ operations \cite{boyd2004convex}. So, the first step needs $(2N+1)F^2$ operations. 
	
	Generally, EVD of an $F\times F$ matrix by QR decomposition-based algorithm needs at least $F^3$
	operations \cite{boyd2004convex}. So, the second step demands $F^3$ operations. 
	
	In the last step, ignoring the search part, we need to calculate Eq.~\ref{Spec}, $Q$ times. Product of $\boldsymbol{E}_n\boldsymbol{E}_n^H$ needs $F^2(2(F-K)-1)$ operations and then, multiplying the resulted $F\times F$ matrix by two $F\times 1$ vectors from both sides, needs $(2F-1)(F+1)$ operations. Therefore, third step requires $Q[2F^2(F-K)+F^2+F-1]$ operations. 
	
	So, the total number of arithmetic operations of MUSIC is
	\begin{align}
		N_A=(2Q+1)F^3+(2N+Q(1-2K)+1)F^2+QF-Q. \label{Nofflops}
	\end{align}
	Consequently, processing energy consumption of MUSIC will be 
	\begin{equation}
		E_p=\frac{N_A}{L_{BS}}W\zeta. \label{PCMusic}
	\end{equation}
	Considering same transmitted power for all UTs, $p$, $E_{tr}$ is
	\begin{equation}
		E_{tr}=\frac{NW\zeta tr(\boldsymbol{s}\boldsymbol{s}^H)}{\omega}=\frac{NKW\zeta p}{\omega}. \label{PtrMUSIC}
	\end{equation}
	
	Hardware energy consumption will be same as Eq.~\ref{Ph}. Therefore, total energy consumption of localization process using MUSIC algorithm is 
	\begin{align}
		E_t&=E_t+E_{tr}+E_h=\underbrace{W\zeta(\frac{2Q+1}{L_{BS}})}_{C_3} F^3+\underbrace{W\zeta(\frac{2N+1+Q(1-2K)}{L_{BS}})}_{C_2}F^2\nonumber\\&+\underbrace{W\zeta(P_{BS}+\frac{Q}{L_{BS}})}_{C_1}F+\underbrace{W\zeta(NKp-\frac{Q}{L_{BS}}+KP_{UT}+P_{fix})}_{C_0} =\sum_{i=0}^{3}C_iF^i. \label{PMUSIC}
	\end{align}
	MSE of the MUSIC algorithm is defined as
	\begin{equation}
		MSE=\frac{1}{N_{MC}}\sum_{i=1}^{N_{MC}}\sum_{k=1}^{K}(\theta_k-\hat{\theta}_k)^2, \label{RMSE}
	\end{equation}
	where $N_{MC}$ is number of Monte-Carlo simulations of MUSIC algorithm. 
	
	Finally, we formulate the $LE$ as
	\begin{equation}
		LE_{MUSIC}=\frac{K}{(E_t)(\sqrt{MSE})}, \label{LEMUSIC}
	\end{equation}
	In the next section, the MUSIC algorithm's $LE$ is optimized w.r.t. $F$, and its behavior when these $F$ antennas are selected from the first and furthest set is illustrated. 
	\begin{table*}[t]
		\centering
		\caption{Simulation parameters}
		\begin{tabular}{|c|c||c|c|}
			\hline
			\textbf{Parameter} & \textbf{Value} & \textbf{Parameter} & \textbf{Value}\\\hline
			Bandwidth: $W$ & $50KHz$ & Operational efficiency: $L_{BS}$&$30 (GFLOP/Joule)$ \\
			Pilot transmission time: $\zeta$ & $0.5(ms)$ & BS's RF-chain Power consumption: $P_{BS}$ & $1W$ \\
			Noise variance: $\sigma^2_n$ & $10^{-20} (W/Hz)$ & UT's RF-chain Power consumption: $P_{UT}$ & $0.3W$ \\
			Channel coefficients: $h^r,h^i$ & $\mathcal{N}(0,0.5)$ & Fixed power consumption in BS: $P_{fix}$ & $0.5W$ \\
			Received pilot power: $p$ & $10^{-19} (W/Hz)$ & Antenna separation ratio to wavelength: $\frac{d}{\lambda}$ & $0.5$ \\
			\hline
		\end{tabular}
		\label{SParameter}
	\end{table*}
	\section{Numerical Results} \label{Numerical}
	In this section, we verify our analytical results that are obtained in previous sections. We study the behavior of $LE$ function for different scenarios, using Monte-Carlo simulations when analytical traceability is not possible. Parameters that we use are listed in Table~\ref{SParameter}, unless otherwise stated. 
	\begin{figure}[t]
		\centering
		\includegraphics[width=0.5\textwidth]{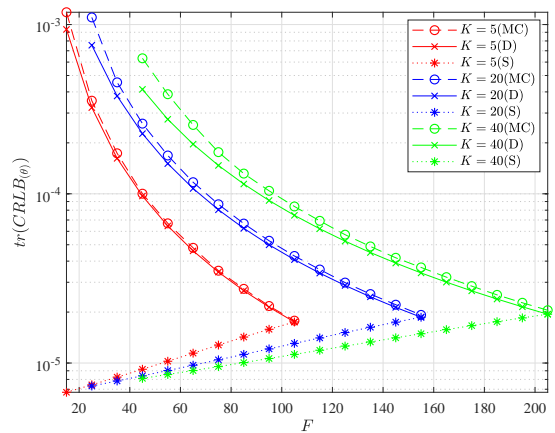}
		\caption{Deterministic and Monte-Carlo simulations of $CRLB_\theta$.}
		\label{FCRLB2}
	\end{figure}
	\begin{figure}[t]
		\centering
		\includegraphics[width=0.5\textwidth]{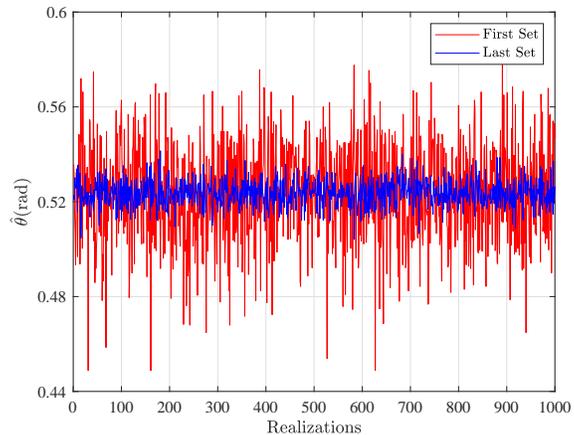}
		\caption{ML estimation for a single UT with $M=16$.}
		\label{MLVariance}
	\end{figure}
	\begin{figure}[t]
		\centering
		\includegraphics[width=0.5\textwidth]{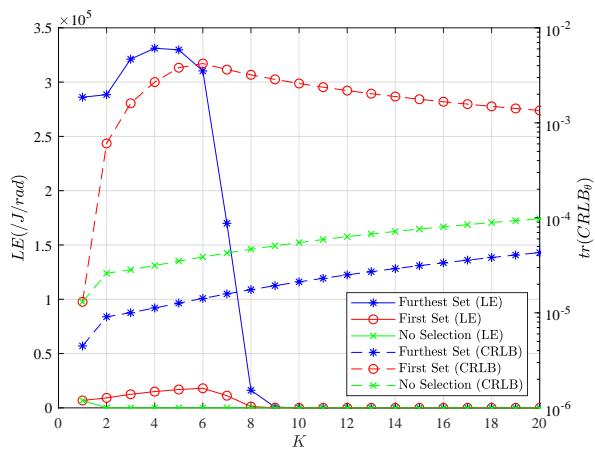}
		\caption{Optimal $LE$ for each corresponding $K$ in ML estimation.}
		\label{MLLE}
	\end{figure}
	
	Fig.~\ref{FCRLB2} shows $tr(\boldsymbol{CRLB}_\theta)$ in which dashed lines are generated by Monte-Carlo simulations (indicated by MC) while solid lines are computed using approximated expression in Eq.~\ref{CRLBASthetaS1} for $K=5,20,40$. As $CRLB_\theta$ grows indefinitely when $\theta\to 0,\pi$, an area of $\pi/10$ from each side is excluded, and UTs are equispaced in the remaining area. Interestingly, deterministic approximation (indicated by D) converges very fast, even when the number of antennas is not so large. Also, Eq.~\ref{CRLBASthetaS1} has the same behavior as Monte-Carlo simulations. It should be noted that as the trace of $\boldsymbol{CRLB}_\theta$ is plotted, the distance between analytical expression and Monte-Carlo simulations will increase for the higher number of UTs since it is the sum of $K$ almost sure convergence. This is why there is a seeming increment between analytical and Monte-Carlo curves in this figure. Furthermore, deterministic $CRLB_{\theta}$ in Eq.\ref{CRLBASthetaopt}, is plotted (indicated by S) when $M=100,155,205$ for corresponding $K=5,20,40$ curves. We see a significant decrease in $CRLB_\theta$ when the furthest set of antennas is used relative to the case when the first group of them is used, more than two orders of magnitude in some cases. As the number of utilized antennas grows, $CRLB_\theta$ for the furthest set grows and becomes closer to the first set curve until all available antennas are used, when they become the same. The larger the $M$, the lower the $CRLB_\theta$ will become for the furthest set.
	\begin{figure}[t]
		\centering
		\includegraphics[width=0.5\textwidth]{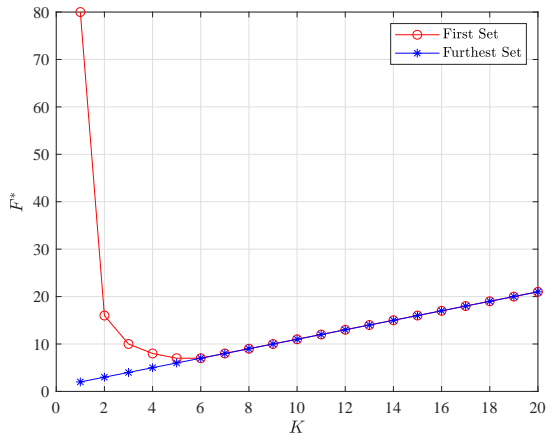}
		\caption{Optimal number of antennas for ML.}
		\label{Mopt}
	\end{figure}
	
	In Fig.~\ref{MLVariance} an ML estimator's output for a single UT is plotted to compare furthest and first antenna selection when the number of antennas is not so large in Monte-Carlo simulations. In this figure, $M=16$ and $F=6$ and UT's AoA is $\theta=\frac{\pi}{6}$. It is seen that the furthest set has dramatically lower variance in its estimation, proving that it outperforms antenna selection from the first set. Ratio of mean variance of furthest antenna set to first set is $0.066$ in Monte-Carlo simulation and the ratio predicted by Eq.~\ref{CRLBASthetaopt} and Eq.~\ref{CRLBASthetaS1} is $0.061$. This proves the accuracy of these deterministic equations.
	
	In Fig.~\ref{MLLE} $LE$ for different numbers of UTs is plotted. In this figure, the right side axis is for LE scale, and solid lines are its corresponding curves. The left side axis is for the $tr(CRLB_\theta)$, and dashed lines are its corresponding curves. For each scenario, the same colors are used. In this figure, $M=80$ and for the green curves (Eq.~\ref{LEp1}), in each $K$, all of the available antennas are used, i.e., no antenna selection. We see that due to high computational complexity, $LE$ drops sharply as $K$ increases. In other two curves (blue for Eq.~\ref{LEl1} and red for replacing $M$ with $F$ in Eq.~\ref{LEp1}), for each $K$, $LE$ is maximized w.r.t. $F$, with constraint $F\ge K$. We see that $LE$ is significantly improved when the optimal number of antennas is used instead of all available antennas. This confirms that using all of the antennas is not always efficient. Moreover, when the optimal number of antennas are selected from the furthest set, $LE$ increases even further, up to $220\%$ in some points, highlighting the advantage of the proposed antenna selection. As $K$ increases, energy consumption increases exponentially, and this causes $LE$ of different scenarios to decrease and become close to each other. However, they are not exactly same until all of the available antennas are used. Furthermore, the $tr(CRLB_\theta)$ indicates the minimum achievable accuracy for each point. One can put a minimum accuracy constraint when optimizing $LE$ depending on one's requirements.
	
	Fig.~\ref{Mopt} shows corresponding optimal number of antennas for each $K$ that maximize $LE$ in Fig.~\ref{MLLE}. It is seen that due to the exponential growth of energy consumption, $F^*$ decreases very fast for the first set and reaches saturation point ($F^*=K+1$) as $K$ increases. Accordingly, this is the point after which $LE$ starts to decrease in Fig.~\ref{MLLE}. Also, as predicted by Eq.~\ref{Fopt}, the optimal number of antennas for the furthest set is always the minimum possible number of antennas, $K+1$. This illustrates that furthest antenna selection obtains higher $LE$ using fewer antennas, which reduces costs of construction and maintenance of the system. 
	
	\begin{figure*}[t!]
		\centering
		\begin{subfigure}[t]{0.4\textwidth}
			\centering
			\includegraphics[width=1.05\textwidth]{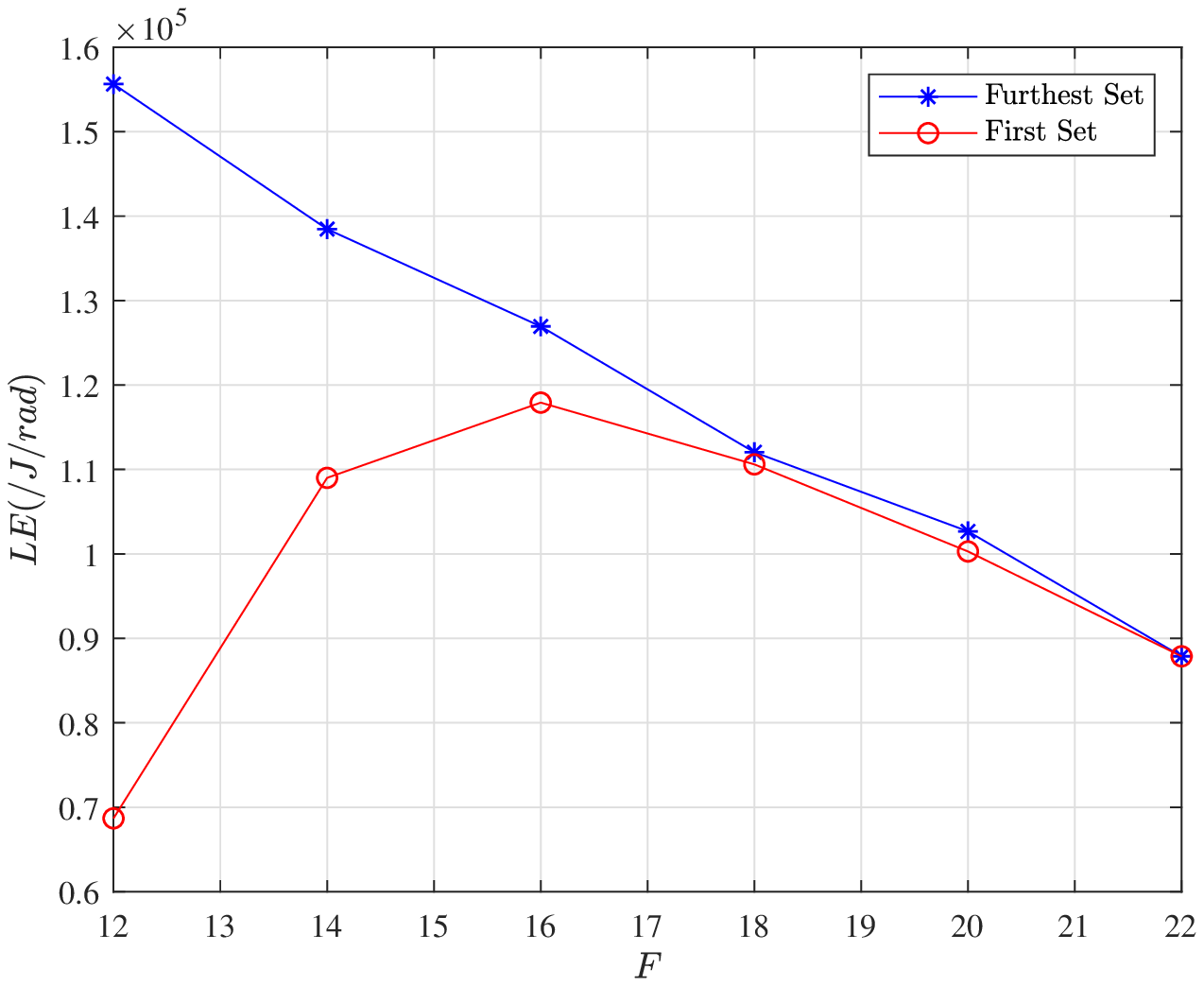}
			\caption{$K=10$ and $M=22$.}
		\end{subfigure}%
		~ 
		\begin{subfigure}[t]{0.4\textwidth}
			\centering
			\includegraphics[width=1.05\textwidth]{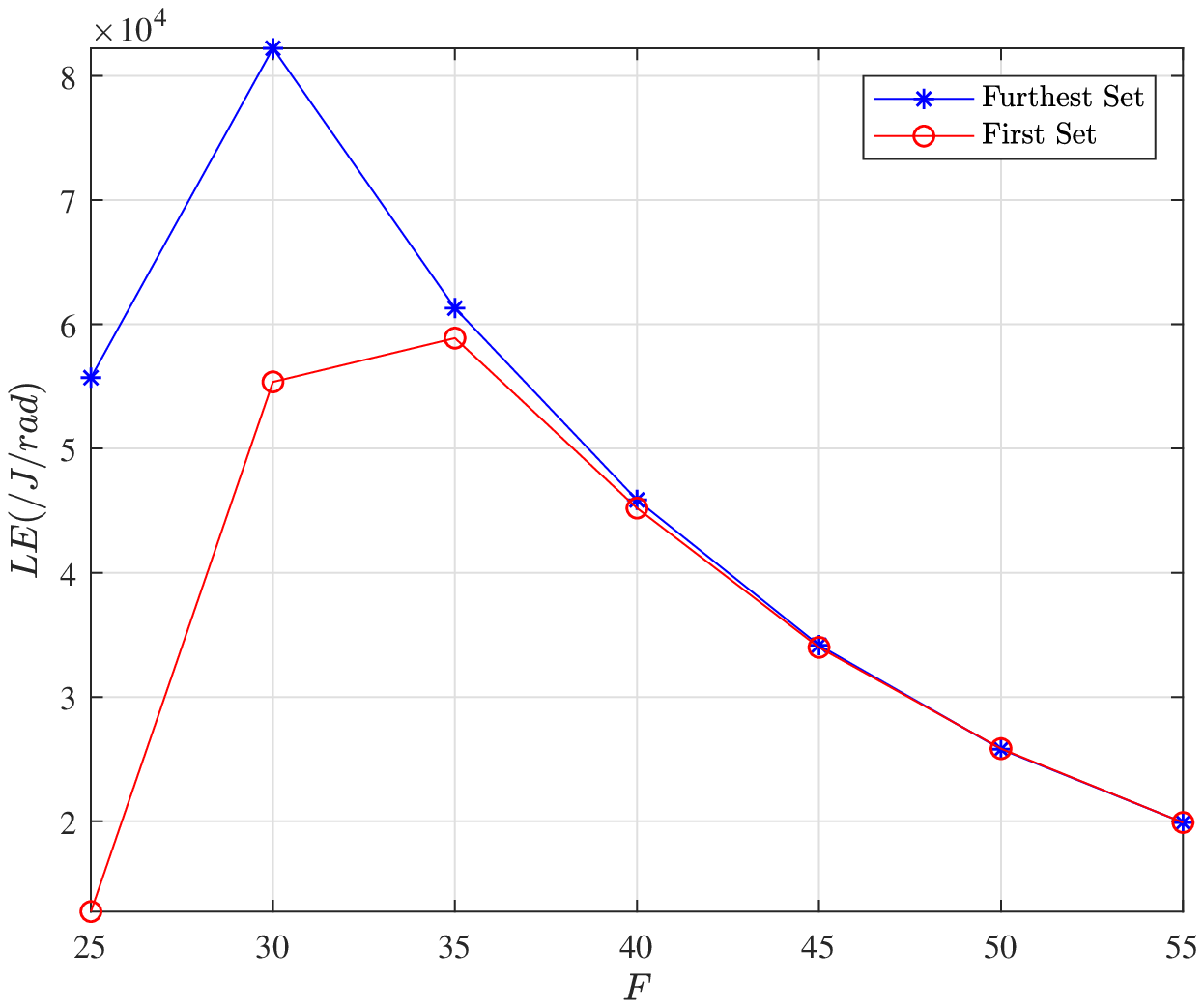}
			\caption{$K=20$ and $M=55$.}
		\end{subfigure}
		\hfill
		\centering
		\begin{subfigure}[t]{0.4\textwidth}
			\centering
			\includegraphics[width=1.05\textwidth]{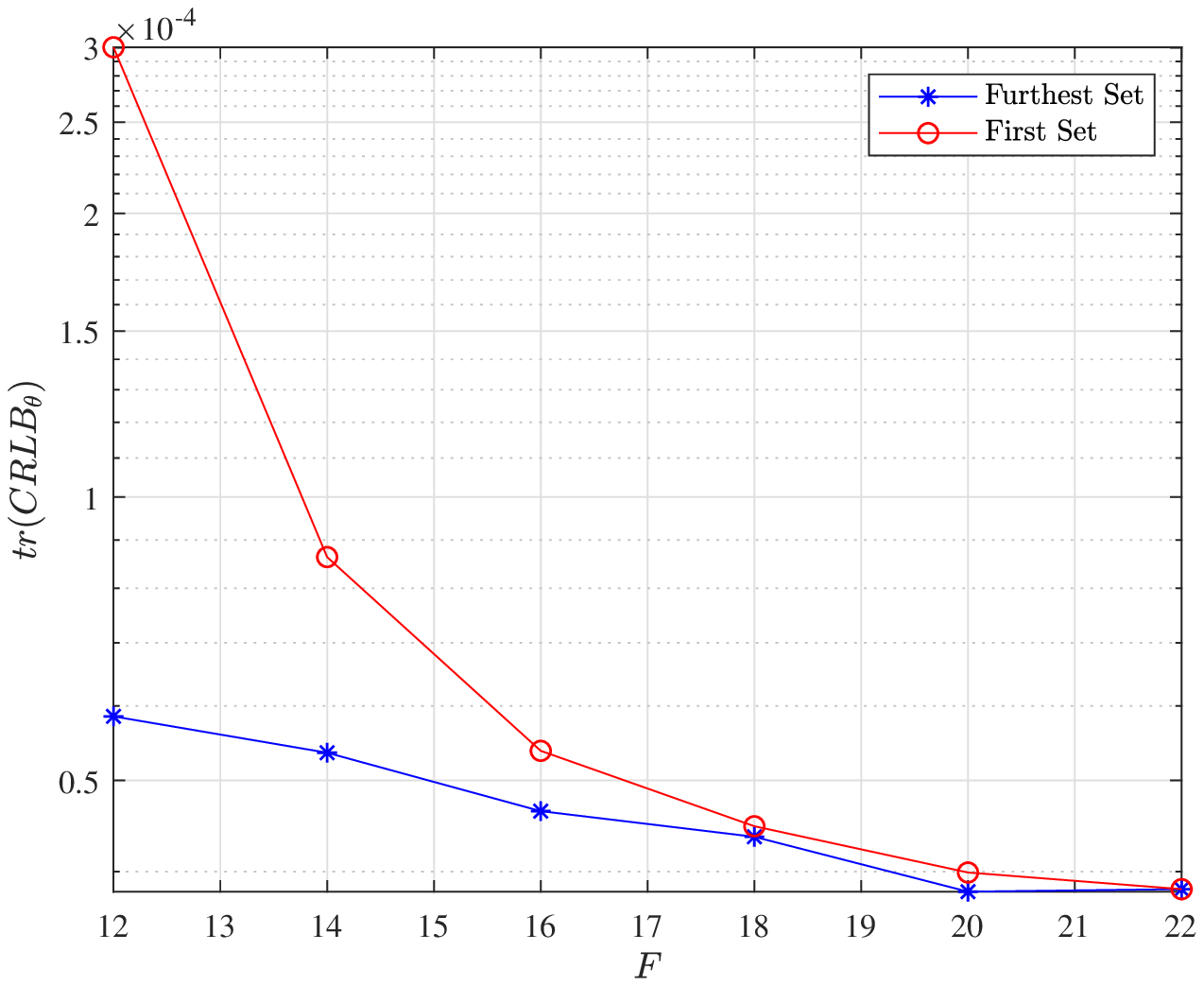}
			\caption{$K=10$ and $M=22$.}
		\end{subfigure}%
		~ 
		\begin{subfigure}[t]{0.4\textwidth}
			\centering
			\includegraphics[width=1.05\textwidth]{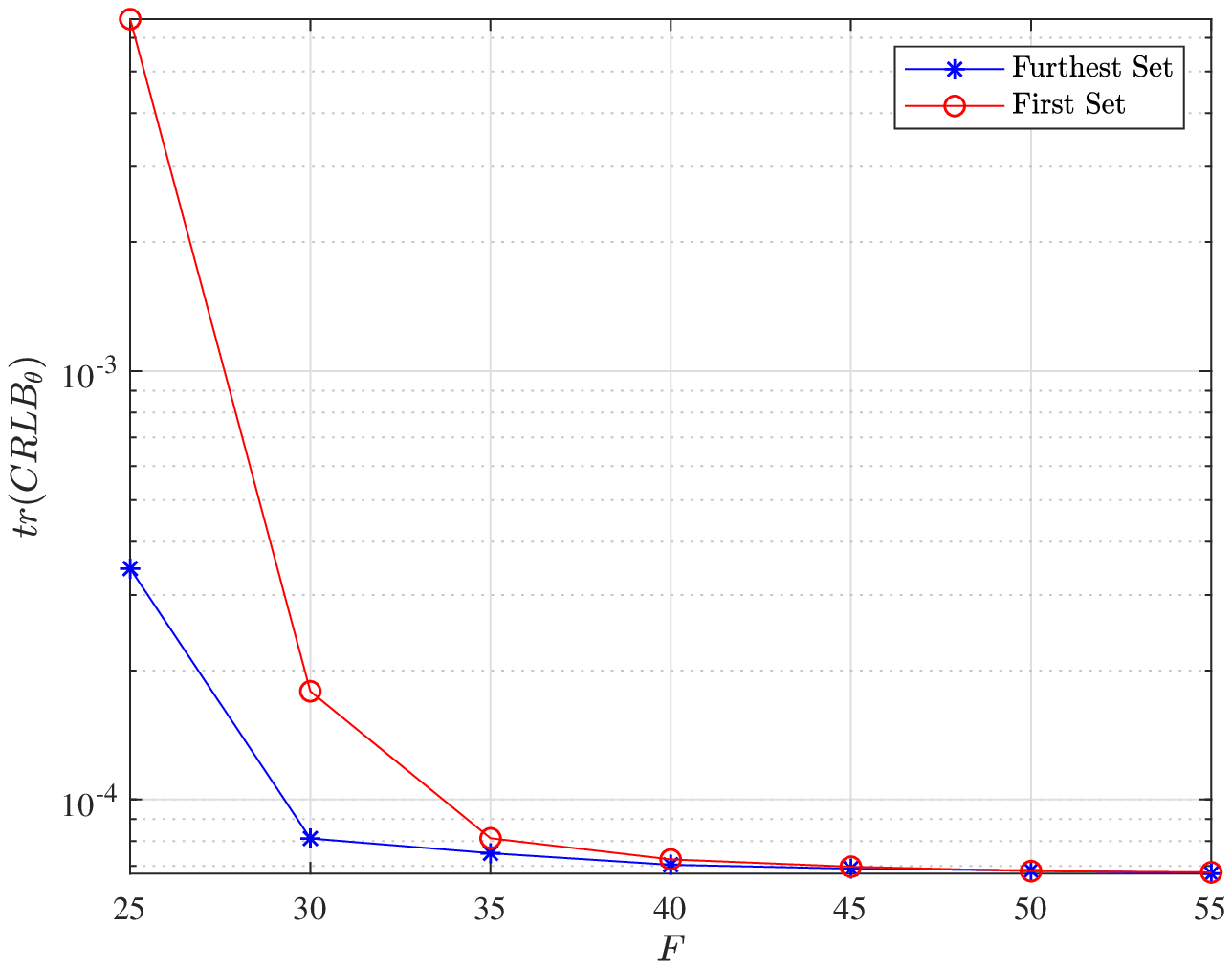}
			\caption{$K=20$ and $M=55$.}
		\end{subfigure}
		\caption{$LE$ and $tr(\boldsymbol{CRLB}_\theta)$ of MUSIC algorithm versus $F$.}\label{LEMUSICf}
	\end{figure*}
	
	$LE$ of MUSIC algorithm is plotted by Monte-Carlo simulation of Fig.~\ref{LEMUSICf}. The left hand side sub-figures of Fig.~\ref{LEMUSICf} are for $K=10$ and $M=22$, and the right hand side sub-figures are for $K=20$ and $M=55$. Same as Fig.~\ref{FCRLB2}, an exclusion area of $\pi/7$ is considered here. By selecting antennas from the first set, $LE$ has an optimum point in $F^*=16$ and $F^*=35$ in these settings, respectively. On the other hand, for furthest antenna selection, in $K=10$, $LE$ is always decreasing, and so its optimum point is at minimum number of antennas, but in $K=20$ $LE$ of the furthest set has an optimum point in $F=30$. Nevertheless, its optimum point always happens before the first set's optimum point, meaning that the furthest antenna selection needs fewer antennas in this scenario, as well. In addition, the furthest antenna selection always has higher $LE$ than the first antenna selection proving that furthest antenna selection is always beneficial, no matter what the channel model is. Furthermore, it is seen that the obtained $\boldsymbol{CRLB}_\theta$ for the furthest set is always lower than the first set, proving the superiority of the furthest antenna selection, once more.
	
	\section{Conclusion} \label{Conclusion}
	This paper analyzed $CRLB$ for AoA estimation when the received signal at the BS is accompanied by multipath signals according to the DMC model. With the help of RMT, we proved that regardless of the distribution of multipath signals, $CRLB$ for AoA almost surely converges toward a closed-form expression in the MU-Massive MIMO setting. This illustrates the contribution of multipath signals in the AoA estimation, providing a theoretical basis for recent studies that estimated location information with the help of multipath signals.
	
	A refined version of the localization efficiency function is presented, which is a ratio of benefits to costs in localization and reflects the trade-off between performance and energy consumption. Contrary to previous studies, we used a comprehensive energy consumption model in this function and showed that there is an optimal number of antennas that maximizes the efficiency in the localization phase. We presented an antenna selection method that minimizes $CRLB$ for AoA when the number of utilized antennas is smaller than the total available antennas. Also, the behavior of both $CRLB$ and $LE$ for this selection scheme is studied. It is shown that the behaviors of both of them change when utilized antennas are selected based on the proposed scheme, which affects the optimal number of antennas that maximizes $LE$.
	
	Numerical results confirmed the $CRLB$'s convergence, even when the number of antennas is not too large. They showed that the proposed antenna selection strategy dramatically reduces $CRLB$. This phenomenon has been validated by Monte-Carlo simulations, too. Furthermore, simulation results confirmed significant improvement of $LE$ when our antenna selection approach is utilized. In fact, with the help of the proposed antenna selection method, ML estimation gains a competitive advantage over a certain region in terms of efficiency. In the end, $LE$ of the MUSIC algorithm is studied and simulated, indicating the applicability of our antenna selection strategy even when the contribution of multipath signals is ignored. 
	\section{acknowledgment}
	This work is supported by F.R.S.-FNRS under the EOS research project MUSEWINET (EOS project 30452698).
	\bibliographystyle{ieeetr}
	
	\appendices
	\section{Proof of Lemma \ref{Lem3}} \label{Appl3}
	By defining $\gamma=\beta(\cos(\theta_1)-\cos(\theta_2))$, from \cite{stoica1989maximum} we know that when $\gamma\ne 0$
	\begin{equation}
		\lim_{M\to\infty}\underbrace{\frac{1}{M^3}\sum_{m=1}^{M}m^2\cos(m\gamma)}_{\triangleq V(M,\gamma)}=0. 
	\end{equation}
	Also, we know that $-2\beta<\gamma<2\beta$, therefore by defining the set $A$ as
	\begin{equation}
		A=\{\gamma\in(-2\beta,2\beta):\lim_{M\to\infty}V(M,\gamma)=0\} 
	\end{equation}
	we have
	\begin{equation}
		P(A)=P(\cos(\theta_1)\ne\cos(\theta_2))
	\end{equation}
	By noting that $\theta\thicksim U[0,\pi]$, we know that the probability of two users having exact same AoA is zero, therefore, $P(A)=1$. So, as $M\to\infty$, $V(M,\gamma)$ is zero with probability one, that is the definition of the almost sure convergence. 
	\qed
	\section{Proof of Theorem \ref{T1}} \label{App1}
	Generally, $\boldsymbol{X}$ is a complex matrix, composed from two real and imaginary parts, so we can write it as 
	\begin{equation}
		\boldsymbol{X}=M\boldsymbol{\Sigma}^{\frac{1}{2}}(\boldsymbol{A}+j\boldsymbol{B})\boldsymbol{D}, 
		\label{a+b}
	\end{equation}  
	where
	\begin{align}
		&\boldsymbol{\Sigma}=\beta^2diag(0,\frac{1}{M^2},\frac{4}{M^2},\ldots,\frac{(M-1)^2}{M^2}),\\
		&\boldsymbol{D}=diag(\sin^2(\theta_1)|s_1|\sqrt{l(r_1)},\ldots,\sin^2(\theta_K)|s_K|\sqrt{l(r_K)}),
		\label{Sig}
	\end{align}
	As $\boldsymbol{\Sigma}$ is a diagonal matrix, its singular values are its diagonal elements. From Eq.~\ref{Sig}, the largest singular value of $\boldsymbol{\Sigma}$ is equal to $1$. Therefore, the spectral norm of $\boldsymbol{\Sigma}$ is bounded \cite{boyd2004convex}. If we replace Eq.~\ref{a+b} in $\boldsymbol{X}^H\boldsymbol{X}$, we have 
	\begin{align}
		\boldsymbol{X}^H\boldsymbol{X}&=M^2\boldsymbol{D}(\boldsymbol{A}+j\boldsymbol{B})^H\boldsymbol{\Sigma}(\boldsymbol{A}+j\boldsymbol{B})\boldsymbol{D}=M^2\boldsymbol{D}(\boldsymbol{A}^T\boldsymbol{\Sigma}\boldsymbol{A}+\boldsymbol{B}^T\boldsymbol{\Sigma}\boldsymbol{B})\boldsymbol{D}\nonumber\\&+jM^2\boldsymbol{D}(\boldsymbol{A}^T\boldsymbol{\Sigma}\boldsymbol{B}-\boldsymbol{B}^T\boldsymbol{\Sigma}\boldsymbol{A})\boldsymbol{D}. \label{}
	\end{align}
	Therefore
	\begin{equation}
		\mathcal{R}e(\boldsymbol{X}^H\boldsymbol{X})=M^2\boldsymbol{D}(\boldsymbol{A}^T\boldsymbol{\Sigma}\boldsymbol{A}+\boldsymbol{B}^T\boldsymbol{\Sigma}\boldsymbol{B})\boldsymbol{D}. \label{Reform}
	\end{equation}
	We can decompose $(\boldsymbol{A}^T\boldsymbol{\Sigma}\boldsymbol{A}+\boldsymbol{B}^T\boldsymbol{\Sigma}\boldsymbol{B})$ as
	\begin{equation}
		\boldsymbol{A}^T\boldsymbol{\Sigma}\boldsymbol{A}+\boldsymbol{B}^T\boldsymbol{\Sigma}\boldsymbol{B}=(\bar{\boldsymbol{A}}^T\boldsymbol{\Sigma}\bar{\boldsymbol{A}}+\bar{\boldsymbol{B}}^T\boldsymbol{\Sigma}\bar{\boldsymbol{B}})+\sigma_h^2(\hat{\boldsymbol{A}}^T\boldsymbol{\Sigma}\hat{\boldsymbol{A}}+\hat{\boldsymbol{B}}^T\boldsymbol{\Sigma}\hat{\boldsymbol{B}}), \label{reform2}
	\end{equation}
	in which 
	\begin{align}
		&\bar{\boldsymbol{A}}_{m,k}=\mathcal{R}e\{\bar{h}_ke^{-j(m\beta \cos(\theta_k)+\varphi_k)}\}, \hspace{4mm} \bar{\boldsymbol{B}}_{m,k}=\mathcal{I}m\{\bar{h}_ke^{-j(m\beta \cos(\theta_k)+\varphi_k)}\},\\
		&\hat{\boldsymbol{A}}_{m,k}=\frac{1}{\sqrt{\sigma_h^2}}\mathcal{R}e\{\hat{h}_{m,k}e^{-j(m\beta \cos(\theta_k)+\varphi_k)}\},\hspace{4mm} \hat{\boldsymbol{B}}_{m,k}=\frac{1}{\sqrt{\sigma_h^2}}\mathcal{I}m\{\hat{h}_{m,k}e^{-j(m\beta \cos(\theta_k)+\varphi_k)}\},
	\end{align}
	and $\varphi_k$ is phase of $k$th transmitted pilot. So, Eq.\ref{Reform} can be written as
	\begin{equation}
		\mathcal{R}e(\boldsymbol{X}^H\boldsymbol{X})=M^2\boldsymbol{D}(\bar{\boldsymbol{A}}^T\boldsymbol{\Sigma}\bar{\boldsymbol{A}}+\bar{\boldsymbol{B}}^T\boldsymbol{\Sigma}\bar{\boldsymbol{B}})\boldsymbol{D}+M^2\sigma_h^2\boldsymbol{D}(\hat{\boldsymbol{A}}^T\boldsymbol{\Sigma}\hat{\boldsymbol{A}}+\hat{\boldsymbol{B}}^T\boldsymbol{\Sigma}\hat{\boldsymbol{B}})\boldsymbol{D}. \label{separate}
	\end{equation}
	Now, we investigate the behavior of both parts of the right hand side of Eq.~\ref{separate} separately. First, we analyze $\bar{\boldsymbol{A}}^T\boldsymbol{\Sigma}\bar{\boldsymbol{A}}+\bar{\boldsymbol{B}}^T\boldsymbol{\Sigma}\bar{\boldsymbol{B}}$. The $(k_1,k_2)$th element of this matrix will be
	\begin{align}
		(\bar{\boldsymbol{A}}^T\boldsymbol{\Sigma}\bar{\boldsymbol{A}}+\bar{\boldsymbol{B}}^T\boldsymbol{\Sigma}\bar{\boldsymbol{B}})_{k_1,k_2}&=\mathcal{R}e\{\bar{h}_{k_1}\boldsymbol{a}_{Rx}(\theta_{k_1})^T\}\boldsymbol{\Sigma}\mathcal{R}e\{\bar{h}_{k_2}\boldsymbol{a}_{Rx}(\theta_{k_2})\}\nonumber\\&+\mathcal{I}m\{\bar{h}_{k_1}\boldsymbol{a}_{Rx}(\theta_{k_1})^T\}\boldsymbol{\Sigma}\mathcal{I}m\{\bar{h}_{k_2}\boldsymbol{a}_{Rx}(\theta_{k_2})\}.
	\end{align}
	For any two random variables $p$ and $q$, we have
	\begin{equation} 
		\mathcal{R}e\{pq^*\}=\mathcal{R}e\{p\}\mathcal{R}e\{q\}+\mathcal{I}m\{p\}\mathcal{I}m\{q\}.
	\end{equation}
	So
	\begin{align}
		(\bar{\boldsymbol{A}}^T\boldsymbol{\Sigma}\bar{\boldsymbol{A}}+\bar{\boldsymbol{B}}^T\boldsymbol{\Sigma}\bar{\boldsymbol{B}})_{k_1,k_2}&=\mathcal{R}e\{\bar{h}_{k_1}^*\bar{h}_{k_2}e^{j(\varphi_{k_1}-\varphi_{k_2})}\boldsymbol{a}_{Rx}(\theta_{k_1})^H\boldsymbol{\Sigma}\boldsymbol{a}_{Rx}(\theta_{k_2})\}.
	\end{align}
	Using Lemma \ref{Lem3}, we obtain 
	\begin{equation} 
		\bar{\boldsymbol{A}}^T\boldsymbol{\Sigma}\bar{\boldsymbol{A}}+\bar{\boldsymbol{B}}^T\boldsymbol{\Sigma}\bar{\boldsymbol{B}}\xrightarrow{a.s.}\frac{tr(\boldsymbol{\Sigma})}{M}diag(|\bar{h}_{1}|^2,\ldots, |\bar{h}_{K}|^2)\label{1rhs}
	\end{equation}
	Now, we analyze the second part of the right hand side of Eq.~\ref{reform2}. Choosing an arbitrary element of $\hat{\boldsymbol{A}}$, we have
	\begin{align}
		\hat{\boldsymbol{A}}_{m,k}&=\frac{1}{\sqrt{M\sigma_h^2}}\mathcal{R}e\{h_{m,k}e^{-j(m\beta \cos(\theta_k)+\varphi_k)}\}=\frac{1}{\sqrt{M\sigma_h^2}}(h^r_{m,k}\cos(m\beta\cos(\theta_k)+\varphi_k)\nonumber\\&+h^i_{m,k}\sin(m\beta\cos(\theta_k)+\varphi_k)).
	\end{align}
	So the variance of each element will be
	\begin{align}
		\mathbb{E}\{\hat{\boldsymbol{A}}_{m,k}\hat{\boldsymbol{A}}_{m,k}^*\}&=\frac{1}{M\sigma_h^2} \mathbb{E}\{|h^r_{m,k}|^2\}\cos^2((m-1)\sqrt{\beta}\cos(\theta_k)+\varphi_k)\nonumber\\&+\frac{1}{M\sigma_h^2}\mathbb{E}\{|h^i_{m,k}|^2\}\sin^2((m-1)\sqrt{\beta}\cos(\theta_k)+\varphi_k)=\frac{1}{M}.
	\end{align}
	Similarly, $\mathbb{E}\{\hat{\boldsymbol{B}_{m,k}}\hat{\boldsymbol{B}_{m,k}}^*\}=\frac{1}{M}$. For the fourth order moment we have
	\begin{align}
		&\mathbb{E}\{|\hat{\boldsymbol{A}}_{m,k}|^4\}=\mathbb{E}\{\hat{\boldsymbol{A}}_{m,k}\hat{\boldsymbol{A}}_{m,k}^*\hat{\boldsymbol{A}}_{m,k}\hat{\boldsymbol{A}}_{m,k}^*\}\nonumber\\&=\frac{1}{M^2\sigma_h^4}\mathbb{E}\{(h^r_{m,k}\cos((m-1)\beta\cos(\theta_k)+\varphi_k)+h^i_{m,k}\sin((m-1)\beta\cos(\theta_k)+\varphi_k))^4\}
	\end{align}
	After some algebraic simplification and using the fact that $h^r$ and $h^i$ are independent from each other, we obtain the following term for the fourth order moment
	\begin{align}
		\mathbb{E}\{|\hat{\boldsymbol{A}}_{m,k}|^4\}&=\frac{1}{M^2\sigma_h^4}[\mathbb{E}\{|h^r_{m,k}|^4\}\cos^4((m-1)\beta\cos(\theta_k)+\varphi_k)\nonumber\\&+\mathbb{E}\{|h^i_{m,k}|^4\}\sin^4((m-1)\beta\cos(\theta_k)+\varphi_k)]\nonumber\\&+\frac{6}{M^2}\cos^2((m-1)\beta\cos(\theta_k)+\varphi_k)\sin^2((m-1)\beta\cos(\theta_k)+\varphi_k). \label{4mom}
	\end{align}
	Noting the fact that channel distribution is independent from the number of antennas at the BS, from Eq.~\ref{4mom} it is evident that for a bounded $\mathbb{E}\{|h^r_{m,k}|^4\}$ and $\mathbb{E}\{|h^i_{m,k}|^4\}$, fourth moment of all elements of $\boldsymbol{A}$ are in the order of $\mathcal{O}(\frac{1}{M^2})$. Same condition is held for all the elements of $\boldsymbol{B}$. Using same approach, it can be shown that the order of eighth moment of all the elements of both $\boldsymbol{A}$ and $\boldsymbol{B}$ are in the order of $\mathcal{O}(\frac{1}{M^4})$.
	
	Rewriting $\hat{\boldsymbol{A}}$ as
	\begin{equation}
		\hat{\boldsymbol{A}}=[\hat{\boldsymbol{a}}_1 \hat{\boldsymbol{a}}_2 \ldots \hat{\boldsymbol{a}}_K].
	\end{equation}
	From Lemma~\ref{Lem1} and Lemma~\ref{Lem2} we have
	\begin{align}
		&\hat{\boldsymbol{a}}^H_1\boldsymbol{\Sigma}\hat{\boldsymbol{a}}_1-\frac{1}{M}tr(\boldsymbol{\Sigma})\xrightarrow{a.s.}0, \nonumber \\
		&\hat{\boldsymbol{a}}^H_2\boldsymbol{\Sigma}\hat{\boldsymbol{a}}_2-\frac{1}{M}tr(\boldsymbol{\Sigma})\xrightarrow{a.s.}0, \nonumber \\
		&\hat{\boldsymbol{a}}^H_1\boldsymbol{\Sigma}\hat{\boldsymbol{a}}_2\xrightarrow{a.s.}0, \nonumber \\
		&\hat{\boldsymbol{a}}^H_2\boldsymbol{\Sigma}\hat{\boldsymbol{a}}_1\xrightarrow{a.s.}0. \nonumber
	\end{align}
	So, we can write
	\begin{equation}
		\left[\begin{array}{c} 
			\hspace{2mm}\hat{\boldsymbol{a}}^H_1 \hspace{2mm}\\
			\hspace{2mm}\hat{\boldsymbol{a}}^H_2 \hspace{2mm}\\
		\end{array}\right] \boldsymbol{\Sigma} \left[\begin{array}{c} 
			\hat{\boldsymbol{a}}_1 \hspace{2mm} \hat{\boldsymbol{a}}_2
		\end{array}\right]- \frac{1}{M}\left[\begin{array}{cc} 
			tr(\boldsymbol{\Sigma}) & 0 \\
			0 & tr(\boldsymbol{\Sigma})
		\end{array}\right]\xrightarrow{a.s.}\boldsymbol{0}_2.
	\end{equation}
	After repeating this process for $K$ columns of $\boldsymbol{A}$, we will have
	\begin{equation}
		\hat{\boldsymbol{A}}^H\boldsymbol{\Sigma}\hat{\boldsymbol{A}}-\frac{1}{M}tr(\boldsymbol{\Sigma})\boldsymbol{I}_K\xrightarrow{a.s.}0. \label{traceAbar}
	\end{equation}
	Similarly, 
	\begin{equation}
		\hat{\boldsymbol{B}}^H\boldsymbol{\Sigma}\hat{\boldsymbol{B}}-\frac{1}{M}tr(\boldsymbol{\Sigma})\boldsymbol{I}_K\xrightarrow{a.s.}0. \label{traceBbar}
	\end{equation}
	Based on \ref{traceAbar} and \ref{traceBbar} and the fact that both $\boldsymbol{A}$ and $\boldsymbol{B}$ consist of real elements, we can write
	\begin{equation}
		\hat{\boldsymbol{A}}^T\boldsymbol{\Sigma}\hat{\boldsymbol{A}}+\hat{\boldsymbol{B}}^T\boldsymbol{\Sigma}\hat{\boldsymbol{B}}-\frac{2}{M}tr(\boldsymbol{\Sigma})\boldsymbol{I}_K\xrightarrow{a.s.}0. \label{AtABtB}
	\end{equation}
	By replacing Eq.~\ref{1rhs} and Eq.~\ref{AtABtB} in Eq.~\ref{reform2}, we obtain
	\begin{equation}
		\boldsymbol{A}^T\boldsymbol{\Sigma}\boldsymbol{A}+\boldsymbol{B}^T\boldsymbol{\Sigma}\boldsymbol{B}\xrightarrow{a.s.}\frac{tr(\boldsymbol{\Sigma})}{M}(2\sigma_h^2+diag(|\bar{h}_{1}|^2,\ldots, |\bar{h}_{K}|^2)).
	\end{equation}
	Consequently
	\begin{equation}
		(\boldsymbol{A}^T\boldsymbol{\Sigma}\boldsymbol{A}+\boldsymbol{B}^T\boldsymbol{\Sigma}\boldsymbol{B})^{-1}\xrightarrow{a.s.}\frac{M}{2tr(\boldsymbol{\Sigma})}(2\sigma_h^2+diag(|\bar{h}_{1}|^2,\ldots, |\bar{h}_{K}|^2))^{-1}. \label{sdsd}
	\end{equation}
	Defining $\boldsymbol{S}\triangleq\sigma_n^2\boldsymbol{D}^{-2}$ and noticing that
	\begin{equation}
		tr(\boldsymbol{\Sigma})=\frac{\beta^2}{M^2}\sum_{m=0}^{M-1}m^2=\frac{\beta^2M(M-1)(2M-1)}{6M^2}, \label{alph2}
	\end{equation}
	we obtain
	\begin{align}
		\boldsymbol{CRLB}_\theta=\frac{\sigma_n^2}{2}(\mathcal{R}e(\boldsymbol{X}^H\boldsymbol{X}))^{-1}\xrightarrow{a.s.}\frac{3}{\beta^2(M-1)(2M-1)}\boldsymbol{S}. \label{traceReal}
	\end{align}
	\qed
\end{document}